\pgfplotsset{width=7cm,compat=1.3}
\newcommand{\nop}[1]{} 
\newcommand{\shorten}[1]{}
\newtheorem{proposition}{Proposition}
\newtheorem{theorem}{Theorem}
\newtheorem{definition}{Definition}
\newtheorem{lemma}{Lemma}
\newtheorem{remark}{Remark}
\newtheorem{corollary}{Corollary}
\newcommand{\signed}%
    {{\unskip\nobreak\hfill\penalty50
      \hskip2em\hbox{}\nobreak\hfil $\blacksquare$
      \parfillskip=0pt \finalhyphendemerits=0 \par}}
\DeclareMathOperator{\spn}{span}
\begin{document}

\title{Functional Broadcast Repair of Multiple Partial Failures in Wireless Distributed Storage Systems}

\author{
	\IEEEauthorblockN{Nitish Mital\IEEEauthorrefmark{1}, Katina Kralevska\IEEEauthorrefmark{2}, Cong Ling\IEEEauthorrefmark{1} and Deniz G\"{u}nd\"{u}z\IEEEauthorrefmark{1} \\ 
	\IEEEauthorblockA{\IEEEauthorrefmark{1}Department of Electrical \& Electronics Engineering, Imperial College London}\\
	\IEEEauthorblockA{\IEEEauthorrefmark{2}Department of Information Security and Communication Technology, NTNU, Norwegian University of Science and Technology}\\
	Email: \{n.mital,c.ling,d.gunduz\}@imperial.ac.uk, katinak@ntnu.no}
}

\maketitle

\begin{abstract}	
We consider a distributed storage system with $n$ nodes, where a user can recover the stored file from any $k$ nodes, and study the problem of repairing $r$ partially failed nodes. We consider \textit{broadcast repair}, that is, $d$ surviving nodes transmit broadcast messages on an error-free wireless channel to the $r$ nodes being repaired, which are then used, together with the surviving data in the local memories of the failed nodes, to recover the lost content. First, we derive the trade-off between the storage capacity and the repair bandwidth for partial repair of multiple failed nodes, based on the cut-set bound for information flow graphs. It is shown that utilizing the broadcast nature of the wireless medium and the surviving contents at the partially failed nodes reduces the repair bandwidth per node significantly. Then, we list a set of invariant conditions that are sufficient for a functional repair code to be feasible. We further propose a scheme for functional repair of multiple failed nodes that satisfies the invariant conditions with high probability, and its extension to the repair of partial failures. The performance of the proposed scheme meets the cut-set bound on all the points on the trade-off curve for all admissible parameters when $k$ is divisible by $r$, while employing linear subpacketization, which is an important practical consideration in the design of distributed storage codes. Unlike random linear codes, which are conventionally used for functional repair of failed nodes, the proposed repair scheme has lower overhead, lower input-output cost, and lower computational complexity during repair. 
\end{abstract}
\let\thefootnote\relax\footnotetext{This work was supported by the European Research Council (ERC) Starting Grant BEACON (grant agreement no. 677854), European Union`s H2020 research and innovation programme under the Marie Sklodowska-Curie Action SCAVENGE (grant agreement no. 675891), and UK EPSRC (EP/T023600/1) under the CHIST-ERA program (CHISTERA-18-SDCDN-001). This work has been partly presented in two conference versions \cite{8849496, 8613401}}

\IEEEpeerreviewmaketitle

\section{Introduction}
Caching popular contents closer to end-users, particularly in the available storage space at the wireless network edge, is attracting a lot of attention in recent years as a promising method to alleviate the increasing traffic on the backhaul links of wireless access points, and to improve the quality of service for end users, particularly by reducing the latency \cite{6495773, 8114221}, or the energy consumption \cite{7438743, 8374062}. The literature on distributed caching systems focuses mostly on the code design or the resource allocation for efficient storage of popular contents, assuming reliable cache nodes. However, storage devices are often unreliable and prone to failures; thus, efficient repair techniques that guarantee continuous data availability are essential for a successful implementation of distributed caching and content delivery techniques in practice \cite{7559801}. 

Maximum distance separable (MDS) codes are typically used for distributed caching of contents at multiple access points \cite{6495773, 8114221, 8613401}. MDS codes provide flexibility for storage so that users with different connectivity or mobility patterns can download a file from only a subset of the access points. In particular, an $(n, k)$ MDS code encodes a file of size $M$ bits by splitting it into $k$ equal-size packets and encoding them into $n$ packets, which are then stored at $n$ cache nodes. Each data packet consists of a set of symbols in a finite field $\mathbb{F}_q$. The original file can be reconstructed by accessing any $k$ out of $n$ packets from $k$ distinct access points. This property is called the \textit{data reconstruction property}. 
When some nodes partially or fully fail, or when content has to be moved to new cache nodes, their cache contents must be regenerated, either in the same failed node, or in new cache nodes, to which the content has to be moved, to be able to continue serving users with the same amount of redundancy. An important objective of edge caching in wireless networks is to reduce the backhaul link loads; therefore, we consider \textit{cache recovery at the edge}; that is, rather than updating the failed cache contents from a central server through backhaul links, the failed cache contents are regenerated with the help of surviving cache nodes. This is called the \textit{data regeneration property} of a code. 
The total amount of data transferred from the surviving nodes to repair the failed nodes is called the \textit{repair bandwidth}. A trivial way to repair failed nodes is to allow the nodes being repaired to connect to any $k$ surviving nodes, download the entire file, and recover the data that was stored originally. However, downloading the entire file to recover the data of a node that stores only a fraction of the file is not efficient. Conventional MDS codes treat the packets stored in a node as a single symbol belonging to the finite field $\mathbb{F}_q$. It can be shown that when the nodes are only permitted to perform linear operations over $\mathbb{F}_q$ when using conventional MDS codes, the total repair bandwidth cannot be smaller than the length of the entire file. Thus, conventional MDS codes have high storage efficiency, but their repair bandwidth is large when using naive repair mechanisms \cite{5550492}.

In contrast, \textit{regenerating codes} are codes that treat the data stored on each node as a vector of $S$ data packets. In particular, a file of size $M$ bits is split into $P$ data packets consisting of symbols in $\mathbb{F}_q$. A total of $S$ coded packets are stored at each node with a storage capacity of $\alpha$ bits. Linear operations over $\mathbb{F}_q$ in this case permit the transfer of a fraction of the data stored in a single node, thus achieving a lower repair bandwidth than conventional MDS codes with naive repair mechanisms. Similarly to existing literature \cite{6352912,10.1145/2897518.2897525,Tamo2017OptimalRO,8717612,7463553,7457282}, we refer to $S$ as subpacketization. Such codes, constructed over a vector alphabet, are called \textit{array codes} in the distributed storage literature. During repair, $r$ failed nodes are allowed to connect to $d$ surviving nodes, and download a total of $\gamma=d\beta$ bits to repair the lost contents, where $\beta$ is the number of bits transmitted by each of the $d$ surviving nodes that are connected to. Dimakis \textit{et al.} showed in  \cite{5550492} that there is a fundamental trade-off between the storage capacity $\alpha$ and the repair bandwidth $\gamma$ by mapping the repair problem in a distributed storage system to a multicasting network coding problem \cite{Gamal:2012:NIT:2181143} over an information flow graph \cite{850663}. The trade-off is shown to be characterized by the cut-set bound for linear network coding, and can be achieved with high probability using random linear coding \cite{Gamal:2012:NIT:2181143}. The analysis in \cite{5550492} focuses on a single node repair, that is, losing one node triggers the repair process. Existing literature has mainly focused on two extremes of this trade-off: the minimum-storage regenerating (MSR) point and the minimum-bandwidth regenerating (MBR) point. Apart from a low repair bandwidth, it is also desirable for a regenerating code to have low subpacketization. Subpacketization determines the smallest file size that can be encoded, and the number of operations required in the encoding and decoding processes. 

\subsection{Exact vs functional repair}
\indent Another terminology frequently found in the literature is that of \textit{exact repair} and \textit{functional repair}. Functional repair, first introduced in \cite{5550492}, refers to the repair process in which the replacement of a set of failed nodes is such that the data reconstruction property and data regeneration property are preserved in the resulting network of $n$ nodes, while the coded content in the replacement nodes may be different from the coded content in the original nodes. In contrast, exact repair, subsequently introduced and studied in \cite{6062413,5205898}, refers to the repair process in which the replacement nodes store exactly the same coded content as stored originally in the failed nodes. Exact repair is often preferred over functional repair because the former is predictable and does not require nodes to communicate their changing coding coefficients to all other nodes. However, functional repair is generally more flexible than exact repair, and is able to achieve lower repair bandwidth than exact repair \cite{6062413,6804941}.

\subsection{Related work} \label{sec:rw}

Existing literature on distributed storage codes consider the exact repair of a single node at a time \cite{Dimakis2007DeterministicRC,5503165,5961826,6352912,10.1145/2897518.2897525,7457282}. In \cite{5961826}, the product-matrix framework for constructing optimal repair MSR and MBR codes is proposed, which achieves the MBR point for all admissible parameters, and achieves the MSR point for parameters restricted to $d\geq 2k-2$. The product-matrix construction employs a subpacketization level that scales linearly in $n$, $k$ and $d$. Zig-zag codes and HashTag codes, proposed in \cite{6352912} and \cite{8025778} respectively, allow arbitrarily high code rates ($\sfrac{k}{n} \rightarrow 1$) for the MSR point, but require a subpacketization level that is exponential in $k$. Other MSR code constructions, including repair schemes for MDS codes like Reed Solomon (RS) codes, have been proposed that meet the cut-set bound, but their subpacketization level grows exponentially in $n$ \cite{10.1145/2897518.2897525,Tamo2017OptimalRO,8717612,7457282}. 

It was first observed in \cite{5402494} that multiple node repair, that is, when the repair process starts only after $r$ nodes fail, is more efficient in terms of the repair bandwidth per node, compared to repairing each node as it fails. The repair process may be initiated centrally by a controller, which monitors the state of each storage node, or in a decentralized fashion via periodic updates by every storage node of its state to the remaining nodes. In \cite{5978920} and \cite{6565355}, the authors introduce cooperative regenerating codes, which repair multiple failures cooperatively by allowing each of the $r$ nodes being repaired to collect data from any $n-r$ surviving nodes, and then to cooperate with the other $r-1$ nodes being repaired. In \cite{6620465}, the authors design a minimum storage cooperative regenerating (MSCR) code for $n=2k, d= 2k-2, r=2$, which they generalize to the repair of systematic nodes for $n=2k, d=n-r, 2\leq r\leq n-k$ in \cite{coop_repair}. An obvious drawback of these constructions is that they are restricted to a limited set of parameters. Ye and Barg \cite{8410934} give an explicit MSCR construction for all parameter values, that is, for $2\leq r\leq n-k, d> k$, but employ a subpacketization that is exponential in the parameters $n, k$ and $d$. The product-matrix construction of \cite{5961826} for a single node repair is extended to MSCR in \cite{8793154}, meeting the cut-set bound with a linearly scaling subpacketization, but for parameters restricted to $d\geq 2k-2-r$. In \cite{6566803}, the authors give an explicit construction of minimum bandwidth cooperative regenerating (MBCR) codes for all parameters $n, k, d, r$, employing only a linear subpacketization. 

\begin{table*}[!t] 
		\caption{Comparison of explicit constructions that achieve the cut-set bound.}
		\begin{center}
			\begin{tabular}{|c|c|c|}	
			\hline
			Code parameters &  Subpacketization & Ref. \\
				\hline
			$[n,k \rightarrow n, d, r]$ MSR array code & exponential in $k$ & \cite{6352912} \\
			\hline
			All feasible parameters, RS scalar code & $O(n^n)$ & \cite{Tamo2017OptimalRO} \\
			\hline 
			$[n,k,d\geq 2k-2, r=1] $ MSR array code & $d-k+1$ & \cite{5961826} \\
			\hline 
			$[n,k,d,r=1]$, MBR array code & $\frac{k}{2}(2d-k+1)$ & \cite{5961826} \\
			\hline 
			$[n=2k, k, d = n-r, 2\leq r \leq n-k]$, MSCR code & $d - k + r$ & \cite{6620465} \\
			\hline
		All feasible parameters, MSCR code & exponential in $n,k,d$ & \cite{8410934} \\
		\hline
		$[n,k,d\geq 2k-2-r, r]$, MSCR code & $d-k+r$ & \cite{8793154} \\
		\hline
		All feasible parameters, MBCR code & $\frac{k}{2}(2d-k+r)$ & \cite{6566803}\\
		\hline
			\end{tabular}
		\end{center}
	\end{table*}

Similarly to \cite{7000553,7459908}, we consider broadcast repair; that is, transmissions from each node are received in an error-free manner by all the other nodes. The storage-repair bandwidth trade-off for the repair of multiple fully failed nodes using broadcast repair is derived in \cite{7459908}. Additionally, we consider the \textit{partial repair} problem, studied in \cite{7000553}, in which a node loses only a part of its contents, and the remainder of the contents may be used along with the transmissions from the surviving nodes to repair the content, thus further reducing the repair bandwidth. In \cite{7000553}, the authors derive a lower bound on the number of packet transmissions at the MSR point for error-free broadcast partial repair, and provide an explicit code construction for a special case. The information flow graph construction in \cite{7000553} does not capture the relation between the storage capacity and the repair bandwidth, thus focusing only on one of the extreme points on the storage-repair bandwidth trade-off curve, corresponding to the MSR point. In this paper, we obtain the entire optimal trade-off curve. 

In \cite{8277979}, the authors present the storage-repair bandwidth trade-off for clustered storage networks, where multiple nodes within a cluster fail. This is close to the partial failure model that we consider since each cluster is equivalent to a node with multiple memory units, and failure of a subset of memory units in a node is the same as partial failure. However, we consider partial failures at multiple nodes, which is equivalent to failure of multiple nodes in multiple clusters, and we also consider broadcast transmissions from the non-failed nodes.
Another model studied in the literature is the centralized repair model \cite{8469091,8638804}, in which the surviving nodes transmit the repairing packets to a centralized node, which then repairs the failed nodes. The centralized model does not require the nodes being repaired to exchange data like the cooperative repair model, thus making the repair process simpler; moreover, the storage-repair bandwidth trade-off achieved with centralized repair is better than that with cooperative repair. In \cite{8469091,8638804}, it is shown that cooperative repair achieves the minimum repair bandwidth of centralized repair under exact repair but at a slightly higher storage cost. In \cite{8638804}, it is shown that the optimal functional MBR point for centralized repair of multiple nodes is not achievable under exact repair. Unlike the centralized repair model, in this paper we consider broadcast repair. The broadcast repair model is almost equivalent to the centralized multi-node repair model studied in \cite{8469091,8638804}. Therefore, the codes that we construct for broadcast repair are directly applicable to the centralized repair model as well. The main difference between the broadcast and centralized repair models is that while centralized repair employs two phases in the repair process, where a centralized entity first receives the transmissions of the surviving nodes, and then repairs the failed nodes by passing messages to them, broadcast repair employs a single phase of broadcast transmissions from surviving nodes, and therefore, is simpler and more efficient.

Since random linear coding is asymptotically optimal for network coding \cite{1705002}, it is also asymptotically optimal for functional repair in distributed storage \cite{5550492}. However, random linear coding is not an attractive scheme for distributed storage in practice due to large overhead and high computational complexity of the Gaussian elimination-based decoding  \cite{shrader2007packet}. Therefore, most literature on regenerating codes focus on exact repair due to limited overhead and predictability of the system when the contents of the storage nodes do not change with time. However, it is shown in \cite{6062413} that a large portion of the functional repair trade-off curve cannot be achieved under exact repair. It is further shown in \cite{6804941} that there is a non-vanishing gap between the exact repair trade-off curve and the functional repair trade-off curve. Hence, functional repair has been studied in \cite{DBLP:journals/corr/abs-1809-08138,6620243,6283043} in an attempt to construct regenerating codes achieving a trade-off closer to the optimal while incurring low overhead. Hollmann and Poh in \cite{6620243} viewed a regenerating code as a collection of sets of subspaces of a vector space, in which reconstruction corresponds to generating the vector space while repair corresponds to generating a subspace. This differs from the purely coding theoretic view where the file is encoded using a generator matrix, like in \cite{5961826}. With respect to the vector space interpretation of linear functional repair codes characterized in \cite{6620243}, it is stated in \cite{DBLP:journals/corr/abs-1809-08138} - ``The properties of the code are determined entirely by the manner in which the various spaces intersect. The advantage of the geometric perspective is that many classical geometric objects have nice, well-understood properties in terms of how spaces embedded in these objects intersect''. In \cite{DBLP:journals/corr/abs-1809-08138}, several constructions for \textit{strictly functional repair} are proposed using a projective geometry viewpoint for a limited set of parameters, that tolerate multiple node failures. The constructions in \cite{DBLP:journals/corr/abs-1809-08138} use well-known combinatorial objects to construct spaces while controlling how they intersect. A simple example to illustrate the point is a set of three non-concurrent lines in a plane, which gives an exact repair code with parameters $n=3, k=d=2$. In this example, each node stores two points lying on a distinct line. During repair, two nodes deliver one point each to the newcomer, on receiving which the newcomer reconstructs the line previously stored by the failed node.

Despite these works, a general construction of functional regenerating codes with low overhead, achieving all the points on the trade-off curve between storage capacity and repair bandwidth, has not been achieved. This paper addresses this problem by proposing an intuitive and practical construction to achieve functional repair for almost all admissible parameters. Our construction uses the geometric view of a linear repair code, and frames the relationship between the local intersection and global independence properties of the subspaces. In the conference publication \cite{8849496}, we presented a code construction achieving the MBR point and an interior point on the optimal storage-repair bandwidth trade-off for full node repair. In this paper, we modify and improve that scheme, and extend it to achieve all the points on the trade-off curve for full node repair as well as partial repair.

\subsection{Our contributions}
\begin{enumerate}
    \item We derive the optimal storage-repair bandwidth trade-off for broadcast repair of multiple partial failures, and discuss the advantage of repairing multiple failed nodes simultaneously while leveraging unerased data remaining in the failed nodes for repair.
    \item We derive invariant conditions that are sufficient for a functional repair code to be optimal and feasible. These conditions render structure to the functional repair problem, and provide insights into how to construct a feasible functional repair code.
    \item We present explicit codes that achieve the optimal cut-set bound for functional repair of multiple nodes with high probability, for a large number of feasible parameters, in particular, for values of $k$ which are divisible by $r$.
    \item The proposed scheme employs a subpacketization level that scales linearly in the code parameters, has low computational complexity during the repair process, and introduces low overhead unlike random linear coding.
\end{enumerate}

The rest of the paper is organized as follows. The system model is introduced in Section \ref{sec:nm}. The storage-bandwidth trade-off for partial repair is derived in Section \ref{sec:tradeoff}. Code constructions for multiple node repair and partial repair are presented in Section \ref{sec:multiple}. Results and discussions are presented in Section \ref{sec:rd}. We conclude the paper in Section \ref{sec:conc}.

\section{System Model} \label{sec:nm}
Consider a wireless caching system where $n$ nodes, each with storage capacity $\alpha$ bits, store a file of size $M$ bits. We index these storage nodes by $\{ 1, \ldots, n \}$. The nodes are fully connected by a wireless broadcast medium and use orthogonal channels for data transmission. The file is divided into $P$ data packets, and the number of data packets stored in each node, which is referred to as the \textit{subpacketization} of an array code, is defined as $S = \frac{\alpha P}{M}$. The nodes store the file such that a user, modeled as a \textit{data collector (DC)}, can reconstruct the file by obtaining the contents of any $k$ nodes. This is called the \textit{reconstruction property}.

We consider a scenario in which a portion of the stored bits in the storage nodes is subject to being lost. We refer to these nodes as the \textit{faulty nodes}, and to the nodes that do not experience any losses as the \textit{complete nodes}. We assume that the repair occurs in rounds, where a repair round gets initiated when $r$ nodes experience partial failures of $\alpha - \alpha_{1}$ bits, where \textit{the number of non-corrupted bits} $\alpha_1 \triangleq \rho \alpha , \rho \in [0,1]$, is a $\rho^{th}$ fraction of $\alpha$. Thus, a single repair round repairs $r$ faulty nodes, where $r\leq n-d$. There is no loss during a repair round, during which the lost bits in the faulty nodes are repaired with the help of bits transmitted from $d$ complete nodes, $k \leq d \leq n-r$, called the \textit{helper nodes}, and the remaining bits that have not been lost in each of the faulty nodes. In general, the repair is functional, i.e., the repaired portion may not be the same as the original portion, but the repaired nodes satisfy the reconstruction property. See Table \ref{notation} for a list of the parameters associated with a regenerating code.

\begin{table}[t] 
		\caption{Notation.}
		\label{notation}
		\begin{center}
			\begin{tabular}{|l|p{75mm}|}
				\hline
				$n$ & Number of storage nodes\\
				$k$ & Minimum number of nodes required for file reconstruction\\
                $r$ & Number of repaired nodes (newcomers) in each repair round\\
                $d$ & Number of helper nodes \\
				$M$ & File size in number of bits\\
                $\alpha$ & Number of stored bits per node\\
                $\alpha_1$ & Number of non-corrupted bits in a faulty node\\
                $\rho$ & Fraction of non-corrupted bits in a faulty node, i.e., $\alpha_1=\rho \alpha$\\
                $\beta$ & Number of transmitted bits per helper node\\
                $\gamma$ & Total repair bandwidth, i.e., $\gamma = d\beta$\\
				\hline
			\end{tabular}
		\end{center}
	\end{table}

\subsection{Information flow graph}

The repair dynamics of the network can be represented by an information flow graph that evolves in time. See Fig. \ref{fig:flow1} for an illustration. It is a directed acyclic graph consisting of seven types of nodes: a single source node $S$ (orange), storage nodes $x_{in}^i$ (blue), $x_{mid}^i$ (gray), and $x_{out}^i$ (green), failed portion of the nodes $x_f^i$ (red), auxiliary nodes $h_i$ (yellow), and a DC node denoted by $DC$ (cyan). 
Initially, each complete storage node, denoted by $x^i, i =1,\ldots, n$, is represented by two vertices: an input vertex $x_{in}^i$ and an output vertex $x_{out}^i$, which are connected by a directed edge $x_{in}^i \rightarrow x_{out}^i$ with capacity $\alpha$.
A faulty node is represented by four vertices: an input vertex $x_{in}^i$, an intermediate vertex $x^{i}_{mid}$ that is connected to $x_{in}^i$ by a directed edge $x_{in}^i \rightarrow x_{mid}^i$ of capacity $\alpha$, an output vertex $x_{out}^i$ that is connected to $x_{mid}^i$ by a directed edge $x_{mid}^i \rightarrow x_{out}^i$ of capacity $\alpha_1$, and a failed vertex $x_f^i$ (red) that is connected to $x_{mid}^i$ by a directed edge $x_{mid}^i \rightarrow x_{f}^i$ of capacity $\alpha-\alpha_1$. The failed vertex represents the corrupted portion of data in a storage node. 

Each vertex in the graph at any given time has two modes, active or inactive, depending on its availability. At the beginning, the source node $S$ is active and it transmits data to $n$ storage nodes such that a $DC$ can retrieve the file from any $k$ nodes. This is modeled by adding an edge from $S$ to all the input vertices of the storage nodes, $S \rightarrow x_{in}^i, i \in [n]$, with capacity $\infty$\footnote{Note that adding an edge with capacity $\infty$ means that all the information in the node sending the data is available in the input vertices of the nodes receiving the data.}. From this point onwards, the source node becomes inactive, and the storage nodes become active. The directed edge of capacity $\alpha$ between the input vertex and the output vertex representing each storage node allows only $\alpha$ bits of information to propagate forward through the storage node, therefore modeling the storage capacity of the node.

When $r$ nodes experience partial failure of $\alpha - \alpha_1$ bits each, in the $s$-th round, the repair process is triggered and $r$ newcomers join the system. Note that a newcomer represents the corresponding node being repaired. A newcomer $x^i$, where $i = sn+j, j\in [n]$, represents the node $x^j$ after the $s$-th round. For example, if $n=4$ as in Fig. 1, the storage nodes in the beginning are $x^1, \ldots, x^4$. Consider that nodes $x^1$ and $x^2$ fail. After one round of repair, newcomer nodes $x^5$ and $x^6$ represent the repaired nodes $x^1$ and $x^2$ respectively, and $x^7$ and $x^8$ are copies of the nodes $x^3$ and $x^4$, respectively. The lost data is regenerated at the newcomers by receiving functions of the stored data from $d$ helper nodes.
The $d$ helper nodes are connected to the corresponding auxiliary nodes, denoted by $h^i$, with a directed edge $x_{out}^i \rightarrow h^i$ of capacity $\beta$, which denotes the number of bits broadcasted by $x^i$. Each auxiliary node $h^i$ is connected with infinite capacity links to all the newcomers. This represents the broadcast nature of the transmission medium.
\begin{definition} 
The repair bandwidth $\gamma=d\beta$ is defined as the total number of bits the helper nodes broadcast in a repair round.
\end{definition}
We model a newcomer with two vertices $x_{in}^i$ and $x_{out}^i$ and a directed edge $x_{in}^i \rightarrow x_{out}^i$ with capacity $\alpha$. The newcomer $x^i, i={sn+(s-1)r+j}, j\in [r]$, uses the $\alpha_1$ bits from the corresponding node being repaired. This is captured in the flow graph by edges with capacity $\alpha_1, x_{mid}^i \rightarrow x_{out}^i, i=(s-1)n+(s-1)r+j, j\in [r]$, followed by the edges with infinite capacity between the output vertices of the node being repaired and the newcomers.

A DC corresponds to a request to reconstruct the file. DCs connect to any subset of $k$ active nodes and retrieve all the stored data in these nodes, represented with edges with infinite capacity from the active nodes to a node $DC$.

A cut-set in the information flow graph is a subset of edges such that there is no path from the source node $S$ to the $DC$ that does not go through any of the edges in the cut-set. A cut partitions the graph into two disjoint sets of vertices, denoted by the pair $(U,\bar{U})$, where $U$ is the set of vertices on the left of the cut, and $\bar{U}$ is the remaining vertices on the right of the cut, assuming that the direction of all edges in the graph is from left to right. We define the capacity of a cut-set as the sum of its edge capacities, and the \textit{min-cut} of a graph as the cut-set with the minimum capacity among all the cut-sets.
\begin{proposition} \cite{5550492} \label{prop:1}
Consider any given finite information flow graph $\cal G$, with a finite set of DCs. If the min-cut separating the source from each DC is larger than or equal to the file size $M$, then there exists a
linear network code defined over a sufficiently large finite field $\mathbb F$
(whose size depends on the graph size) such that all DCs can reconstruct the original file. Further, randomized network coding guarantees that all collectors can reconstruct the file with probability that can be driven arbitrarily close to $1$ by increasing the field size.
\end{proposition}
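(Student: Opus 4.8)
The plan is to recognize Proposition~\ref{prop:1} as the classical multicast network-coding theorem applied to the information flow graph $\mathcal{G}$, and to reassemble its proof from the Koetter--M\'edard algebraic framework together with the Schwartz--Zippel lemma. First I would normalize $\mathcal{G}$ to a unit-capacity network: fix a topological order (possible since $\mathcal{G}$ is a finite DAG), split each edge of finite capacity $c$ into $c$ parallel unit-capacity edges, and replace each $\infty$-capacity edge by a bundle of parallel unit edges large enough that it never acts as a bottleneck in any of the finitely many min-cuts of interest. Then I would attach a local coding variable to every unit edge, declaring the symbol carried on an edge to be the $\mathbb{F}$-linear combination of the symbols on the edges entering its tail, with the source $S$ injecting the $M$ file symbols $\mathbf{m}\in\mathbb{F}^{M}$ on its out-edges. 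For a fixed DC $t$, the $M$ symbols it receives form the vector $A_t(\boldsymbol{\xi})\mathbf{m}$, where the transfer matrix $A_t(\boldsymbol{\xi})$ has entries that are polynomials in the collection $\boldsymbol{\xi}$ of all coding variables, and $t$ can reconstruct the file iff $A_t(\boldsymbol{\xi})$ has rank $M$, equivalently iff some $M\times M$ minor of $A_t$ does not vanish.

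The second, and genuinely load-bearing, step is to convert the hypothesis ``min-cut $\ge M$ for each DC'' into the statement ``for each DC $t$ there is an $M\times M$ minor $p_t(\boldsymbol{\xi})$ of $A_t$ that is not the identically zero polynomial.'' Here I would invoke Menger's theorem (equivalently the max-flow/min-cut theorem): a min-cut of at least $M$ between $S$ and $t$ yields $M$ edge-disjoint $S$-$t$ paths in the unit-capacity graph. Specializing $\boldsymbol{\xi}$ so that each such path merely forwards its incoming symbol while every other edge carries the zero coefficient makes $A_t$ a permutation of the $M\times M$ identity on the coordinates carried by those paths, so the corresponding minor evaluates to $\pm1\ne 0$; hence $p_t\not\equiv0$. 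Multiplying one such nonzero minor per DC gives a single nonzero polynomial $F(\boldsymbol{\xi})=\prod_t p_t(\boldsymbol{\xi})$ over the finite DC set, and any assignment of $\boldsymbol{\xi}$ with $F(\boldsymbol{\xi})\ne0$ produces a single linear network code under which every DC decodes simultaneously.

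The last step is a degree count followed by Schwartz--Zippel: $\deg F\le\sum_t\deg p_t$, and each $\deg p_t$ is bounded by a quantity that depends only on the size of $\mathcal{G}$ (at most the number of unit edges times $M$). Thus if $|\mathbb{F}|>\deg F$ then $F$ has a non-root among the coefficient assignments, yielding the deterministic code over a sufficiently large field; and if the coefficients are drawn uniformly and independently from $\mathbb{F}$, then $\Pr[F(\boldsymbol{\xi})=0]\le\deg F/|\mathbb{F}|$, which can be driven arbitrarily close to $0$ by enlarging the field, establishing the randomized statement. I expect the main obstacle to be bookkeeping rather than conceptual: one must verify that the auxiliary nodes $h^i$ with their many $\infty$-capacity broadcast out-links, the merging of the $\alpha_1$-capacity ``surviving data'' edges with the repair bundles at each newcomer, and the $\infty$-capacity DC links all reduce cleanly to an ordinary unit-capacity multicast instance, so that Menger's theorem and the transfer-matrix formalism apply verbatim; once that reduction is in place the rest is standard.
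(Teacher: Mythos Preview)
Your sketch is correct and is precisely the standard Koetter--M\'edard/Ho et al.\ argument that underlies this proposition. Note, however, that the paper does not supply its own proof of Proposition~\ref{prop:1}: it is quoted verbatim from \cite{5550492} as a known result, and \cite{5550492} in turn defers to the classical multicast network-coding theorems. So there is no in-paper proof to compare against; your reconstruction is faithful to the original sources and would serve as a self-contained justification.
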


Following Proposition \ref{prop:1}, for the information flow graph construction described above, we find the minimum cut over all possible failure combinations. We enumerate cuts, denoted by $\chi$, as $\chi_1, \chi_2, \chi_3$ (see Fig. \ref{fig:flow1}). In Section \ref{sec:tradeoff}, we demonstrate how to find the min-cut for a specific example, and finally in the proof for Theorem \ref{theorem 1} in Section \ref{sec:tradeoff}, we describe the process for finding the min-cut for a general information flow graph.

\section{Storage-Bandwidth Trade-off for Partial Repair} \label{sec:tradeoff}

Consider the scenario illustrated in Fig. \ref{fig:flow1}, where $n=4, k=2,d=2$ and $r=2$. The capacity of cut $\chi_1$ is $2\alpha_1 + 2\beta$, while the capacity of cut $\chi_2$ is $2\alpha$. Then the min-cut is $\min\{2\alpha_1 + 2\beta, 2\alpha\}$. From Proposition \ref{prop:1}, to ensure that the file can be reconstructed by the $DC$, $\min\{2\alpha_1 + 2\beta, 2\alpha\} \geq M$.

\begin{figure*}
		\centering
		\includegraphics[scale=0.9]{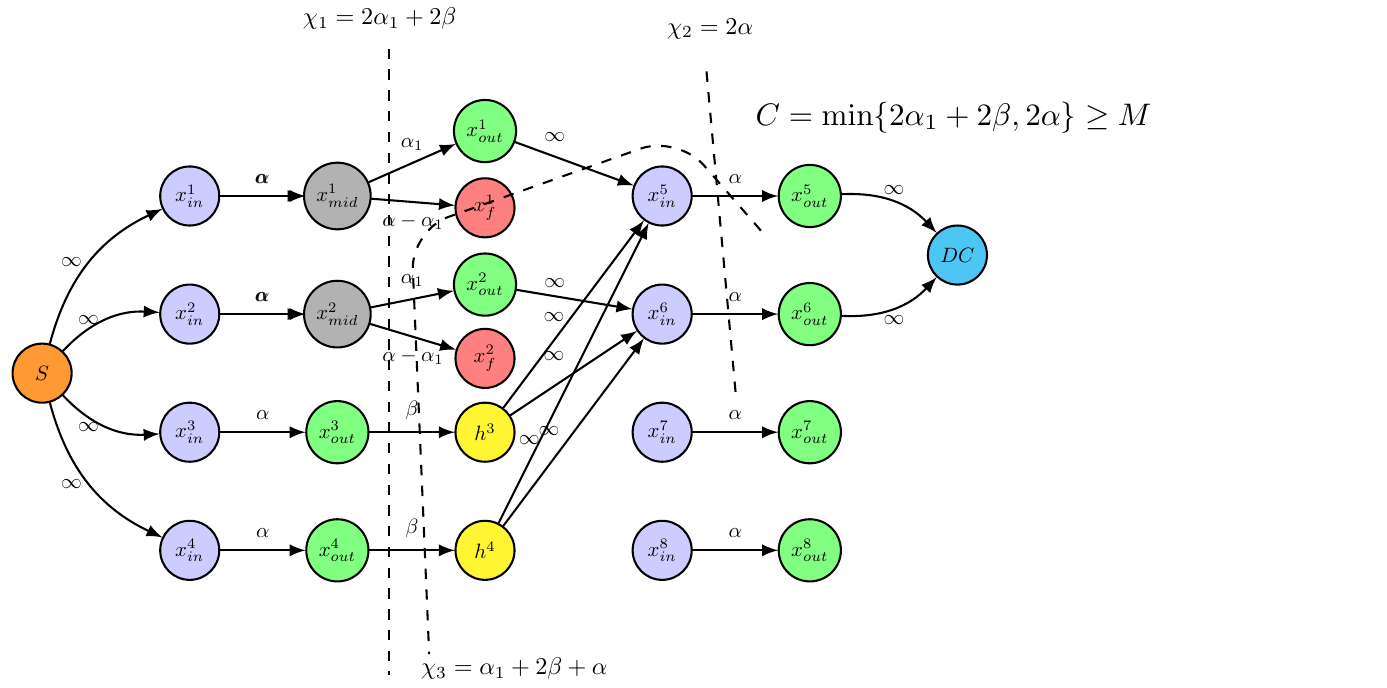} 
        \caption{Information flow graph $\cal G$ with $n=4, k=r=2$, one repair round and cuts $\chi_1,\chi_2, \chi_3$.}
		\label{fig:flow1}
\end{figure*}

For each set of parameters $(n, k,d, \gamma, \alpha, r, \rho)$, there is a family of information flow graphs, each of which corresponds to a particular evolution of node failures/repairs. We denote this family of directed acyclic graphs by $\cal G$$(n, k,d, \gamma, \alpha, r, \rho)$. An $(n, k,d, \gamma, \alpha, r, \rho)$ tuple is feasible, if a code with storage $\alpha$ and repair bandwidth $\gamma$ exists.

\begin{theorem}\label{theorem 1}
For any $\alpha \geq \alpha^{*}(n,k,d,\gamma, r, \rho)$, the points $(n,k,d, \gamma, \alpha, r, \rho)$ are feasible, and linear network codes suffice to achieve them. It is information theoretically impossible to achieve points with $\alpha < \alpha^{*}(n,k,d,\gamma, r, \rho)$. If $r$ divides $k$, the threshold function $\alpha^{*}(n,k,d,\gamma, r, \rho)$ is given by:
\begin{align}
\alpha^{*}(n,k,d,\gamma, r, \rho)=\left\{
                \begin{array}{ll}
                  \frac{M}{k} \hspace{12.5mm} \gamma \in \left[ f(0), \infty \right) \\
                  \frac{M - g(i)\gamma}{k-ir(1-\rho)} \hspace{2mm} \gamma \in \left[ f(i), f(i-1)\right] 
                \end{array}
              \right.
\end{align}
where, for $i={1,2,\ldots , \frac{k}{r}-1},$
\begin{align}
f(i)&\triangleq\frac{2Md(1-\rho)}{(2k-ir(1-\rho))(i+1)+\frac{2k}{r}(d-k)},\\
g(i)&\triangleq \left(2d-2k+r+ir\right)\frac{ir}{2d}.
\end{align}
\end{theorem}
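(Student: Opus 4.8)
The proof follows the max-flow--min-cut route underlying Proposition~\ref{prop:1}. The plan is: (a) for the converse, exhibit for each admissible $i$ an explicit information flow graph in $\mathcal{G}(n,k,d,\gamma,\alpha,r,\rho)$ carrying a cut of capacity exactly $c_i:=\big(k-ir(1-\rho)\big)\alpha+g(i)\gamma$, so that when $\alpha<\alpha^{*}$ some DC sees min-cut below $M$ and cannot reconstruct; (b) for achievability, show conversely that every cut of every graph in the family has capacity at least $\min_{0\le i\le k/r} c_i$, so that $\alpha\ge\alpha^{*}$ forces every min-cut to be $\ge M$ and Proposition~\ref{prop:1} delivers a (random) linear code; and (c) carry out the elementary optimisation identifying $\max_i (M-g(i)\gamma)/(k-ir(1-\rho))$ with the stated piecewise-linear $\alpha^{*}$ and its knots $f(i)$.

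For (a) I would use the ``cascade'' graph $\mathcal{G}_i$: run $i$ successive repair rounds, round $t$ repairing a fresh batch of $r$ previously untouched nodes and drawing its $d$ helpers from (i) the $r(t-1)$ newcomers of rounds $1,\dots,t-1$, (ii) a fixed set $A$ of $k-ir$ untouched nodes, and (iii) $d-r(t-1)-(k-ir)$ further untouched nodes; then let the DC attach to the $ir$ newcomers together with the set $A$. This is where $r\mid k$ is used, so that the $k$ DC-nodes split into $i$ full rounds plus the block $A$. Consider the cut that places the source, all untouched nodes and all auxiliary vertices on the source side: the edges it crosses are the $\alpha$-edge of each node of $A$ (total $(k-ir)\alpha$), the $\alpha_1$-edge carrying the surviving data of each of the $ir$ newcomers (total $ir\rho\alpha$), and, in each round $t$, the $\beta$-capacity edges feeding that round's $r$ newcomers from its $d-r(t-1)-(k-ir)$ source-side helpers, whereas the edges from the earlier newcomers and from the nodes of $A$ lie wholly on the DC side and are free. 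Summing the broadcast contribution over $t=1,\dots,i$ is an arithmetic progression which, after $\beta=\gamma/d$, telescopes to precisely $g(i)\gamma$; hence $\mathcal{G}_i$ has a cut of value $c_i$, and the cut-set (converse) direction of the network-coding correspondence in Proposition~\ref{prop:1} rules out \emph{all} codes, linear or not, once $c_i<M$ for the $i$ whose interval $[f(i),f(i-1)]$ contains $\gamma$ (with $i=0$ and $\alpha<M/k$ for $\gamma\ge f(0)$).

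For (b) I would prove the reverse inequality by normalising an arbitrary minimal cut of an arbitrary graph in the family: a DC-node never needs more than its $\alpha$-edge cut unless it was freshly repaired; a freshly repaired DC-node contributes at least its $\alpha_1$ surviving-data edge plus the broadcast edges of whichever helpers the cut leaves source-side, a contribution minimised by re-using as helpers the already-repaired DC-nodes and the untouched DC-nodes (as in $\mathcal{G}_i$); and the repair rounds can be relabelled so that the newcomers attacked on their repair side come from the latest rounds. Every cut thereby reduces without increase to a cascade cut with some $i$, so the family-wide min-cut is $\min_{0\le i\le k/r} c_i$, and $\alpha\ge\alpha^{*}$ is exactly the requirement that all these be $\ge M$. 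Part (c) is then routine: each constraint $c_i\ge M$ is the half-plane $\alpha\ge(M-g(i)\gamma)/(k-ir(1-\rho))$; the active constraint varies with $\gamma$; the upper envelope of the corresponding affine maps is piecewise linear; equating the pieces for $i$ and $i+1$ gives the crossover $\gamma=f(i)$; and the $i=0$ piece flattens to $\alpha^{*}=M/k$ on $\gamma\ge f(0)$.

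The main obstacle is the min-cut characterisation, above all the normalisation in (b): one must show by a genuine combinatorial argument, ranging over all failure/repair histories and all cuts, that the greedy choices (latest rounds attacked on their repair side; maximal re-use of already-repaired and untouched nodes as helpers; exactly the $\mathcal{G}_i$ incidence pattern between broadcast edges and the cut) are globally optimal, which is what makes the bound tight. Confirming the closed forms (that the round-by-round broadcast sum equals $g(i)\gamma$, and that the envelope's knots are the $f(i)$ as written) is secondary but still needs care; the appeals to Proposition~\ref{prop:1} for achievability and to the cut-set bound for impossibility are immediate.
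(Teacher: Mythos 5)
Your proposal follows essentially the same route as the paper's own proof: a cut-set analysis of the information-flow-graph family, a worst-case cascade of repair rounds in which earlier newcomers and the $k-ir$ untouched DC nodes are re-used as helpers, a per-round choice between the two cut types, and Proposition 1 for achievability; your ``one linear constraint $c_i\ge M$ per $i$ plus upper envelope'' bookkeeping is just the inverse of the paper's inversion of its piecewise-linear min-cut function $C(\alpha)$. Two points need care. First, your cut description is inconsistent as literally written: if all auxiliary vertices and all untouched nodes sit on the source side, then the infinite-capacity edges from the auxiliary vertices to the newcomers' in-vertices (and from the out-vertices of $A$ to the DC) cross the cut and make it infinite; the intended assignment is that the out-vertices of $A$ and of the newcomers, and every auxiliary vertex, lie on the DC side, which indeed produces exactly the finite edge set you list. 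Second, your broadcast sum is $\sum_{t=1}^{i}\bigl(d-r(t-1)-(k-ir)\bigr)\beta=\bigl(i(d-k)+\tfrac{ri(i+1)}{2}\bigr)\beta=\tfrac{i}{2d}(2d-2k+r+ir)\gamma$, which does \emph{not} equal the theorem's printed $g(i)\gamma=\tfrac{ir}{2d}(2d-2k+r+ir)\gamma$ unless $r=1$, so the claim that the sum ``telescopes to precisely $g(i)\gamma$'' is not true of the stated $g$; your value is the one consistent with the paper's own appendix expression for $C(\alpha)$, with the printed $f(i)$, and with Corollaries 1 and 2 (the printed $g(i)$ appears to carry a stray factor $r$), and you should say so explicitly rather than silently identify the two. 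Finally, the step you rightly flag as the main obstacle — that no failure/repair history and no cut can beat the cascade, i.e.\ that per round only the two cut types matter — is also the step the paper treats only informally (the $\chi_3$ comparison and the topological-order argument), so your plan is at the same level of rigor there rather than introducing a new gap.
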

\begin{proof}
Proof in Appendix \ref{appendix1}.
\end{proof}
\begin{corollary} \label{corollary1}
The minimum storage point is achieved by the pair
\begin{align}
(\alpha^*_{MSR}, \gamma^*_{MSR})=\left(\frac{M}{k}, \frac{Mrd(1-\rho)}{k(d-k+r)}\right).
\end{align}
\end{corollary}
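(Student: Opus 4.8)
The plan is to derive Corollary~\ref{corollary1} directly from the threshold function in Theorem~\ref{theorem 1} by specializing to the minimum-storage regime. First I would identify the minimum-storage point as the regime $\gamma \in [f(0),\infty)$, where $\alpha^*(n,k,d,\gamma,r,\rho) = \frac{M}{k}$. Here $\alpha$ takes its smallest possible value $\frac{M}{k}$, since the reconstruction property forces any $k$ nodes to hold at least $M$ bits in total, hence $\alpha \geq M/k$ always. Thus the MSR point lies on the boundary $\gamma = f(0)$, i.e.\ the smallest repair bandwidth compatible with $\alpha = M/k$.

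Next I would compute $f(0)$ by substituting $i=0$ into the definition
\begin{align}
f(i) = \frac{2Md(1-\rho)}{(2k-ir(1-\rho))(i+1) + \frac{2k}{r}(d-k)}.
\end{align}
Setting $i=0$ gives
\begin{align}
f(0) = \frac{2Md(1-\rho)}{2k + \frac{2k}{r}(d-k)} = \frac{2Md(1-\rho)}{\frac{2k}{r}\left(r + d - k\right)} = \frac{Mrd(1-\rho)}{k(d-k+r)},
\end{align}
which is exactly $\gamma^*_{MSR}$ as claimed. Since $\alpha^*_{MSR} = M/k$ is immediate from the first branch of the threshold function evaluated at this $\gamma$, the pair $(\alpha^*_{MSR},\gamma^*_{MSR}) = \left(\frac{M}{k}, \frac{Mrd(1-\rho)}{k(d-k+r)}\right)$ follows.

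The only subtlety is to argue that this is indeed the \emph{minimum storage} point rather than merely a point on the curve: one should check that the two pieces of $\alpha^*$ agree at $\gamma = f(0)$ (continuity of the trade-off curve at the junction $i=0$), and that $\alpha^*$ is nonincreasing in $\gamma$ on the interior branches so that no smaller $\alpha$ is attainable. Continuity at $\gamma = f(0)$ amounts to verifying $\frac{M - g(1)f(0)}{k - r(1-\rho)} = \frac{M}{k}$, i.e.\ $g(1)f(0) = \frac{Mr(1-\rho)}{k}$, which is a direct substitution using $g(1) = (2d-2k+2r)\frac{r}{2d}$ and the value of $f(0)$ above. I do not expect a genuine obstacle here — the corollary is a routine specialization of Theorem~\ref{theorem 1} — the main care is simply in the algebraic simplification of $f(0)$ and in invoking monotonicity of the threshold function to justify the ``minimum storage'' terminology.
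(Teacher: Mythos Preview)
Your proposal is correct and matches the paper's intended derivation: the paper states Corollary~\ref{corollary1} without a separate proof, treating it as the immediate specialization of Theorem~\ref{theorem 1} at the boundary $\gamma = f(0)$, and your computation of $f(0)$ is exactly that. One small caution: your continuity sanity check asserts $g(1)f(0) = \frac{Mr(1-\rho)}{k}$, but direct substitution gives $g(1)f(0) = \frac{Mr^2(1-\rho)}{k}$, so either that side-check needs reworking or there is a typo in the paper's stated $f$ or $g$; in any case this does not affect the main derivation of $(\alpha^*_{MSR},\gamma^*_{MSR})$.
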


\begin{corollary} \label{corollary2}
The minimum repair bandwidth point is achieved by the pair
\begin{align}
(\alpha^*_{MBR}, \gamma_{MBR}^*)&=\left( \frac{2Md}{k(2d-(k-r)(1-\rho))}, \right.\\
& \hspace{0.9cm} \left. \frac{2Mrd(1-\rho)}{k(2d-(k-r)(1-\rho))}\right).
\end{align}
\end{corollary}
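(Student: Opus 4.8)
\textit{Proof proposal.} The plan is to read off the MBR point as the left-hand corner (smallest $\gamma$) of the storage--repair bandwidth trade-off already characterized in Theorem~\ref{theorem 1}, and then to reduce the two expressions of the corollary to closed form by plugging in the appropriate value of the index $i$. Since $r \mid k$, the threshold $\alpha^{*}(n,k,d,\gamma,r,\rho)$ equals $\frac{M-g(i)\gamma}{k-ir(1-\rho)}$ on the interval $\gamma\in[f(i),f(i-1)]$ for $i=1,\dots,\frac{k}{r}-1$, and no finite $\alpha$ makes the min-cut of every graph in $\mathcal{G}(n,k,d,\gamma,\alpha,r,\rho)$ reach $M$ once $\gamma<f(\frac{k}{r}-1)$ --- this last fact is exactly the content of the cut-set computation in the proof of Theorem~\ref{theorem 1}. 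Hence the minimum feasible repair bandwidth is $\gamma^{*}_{MBR}=f(\frac{k}{r}-1)$, the corresponding minimum storage is $\alpha^{*}(f(\frac{k}{r}-1))$ evaluated on the last branch, and, by the achievability part of Theorem~\ref{theorem 1}, this pair is attained by a linear network code.

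For $\gamma^{*}_{MBR}$ I would substitute $i=\frac{k}{r}-1$ into $f(i)$, using $i+1=\frac{k}{r}$ and $ir=k-r$. The denominator collapses to $\frac{k}{r}\bigl[(2k-(k-r)(1-\rho))+2(d-k)\bigr]=\frac{k}{r}\bigl(2d-(k-r)(1-\rho)\bigr)$, which gives $\gamma^{*}_{MBR}=\frac{2Mrd(1-\rho)}{k(2d-(k-r)(1-\rho))}$. The quantity $2d-(k-r)(1-\rho)$ is strictly positive because $(k-r)(1-\rho)<k\le d<2d$, so the expression is well defined.

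For $\alpha^{*}_{MBR}$ I would plug $i=\frac{k}{r}-1$ and $\gamma=\gamma^{*}_{MBR}$ into $\frac{M-g(i)\gamma}{k-ir(1-\rho)}$, where $g(\frac{k}{r}-1)=(2d-k)\frac{k-r}{2d}$ (since $2d-2k+r+ir=2d-k$) and the denominator is $k-(k-r)(1-\rho)$. The decisive step is the numerator simplification
\begin{align*}
M-g(\tfrac{k}{r}-1)\,\gamma^{*}_{MBR} &= M\cdot\frac{k\bigl(2d-(k-r)(1-\rho)\bigr)-(2d-k)(k-r)(1-\rho)}{k\bigl(2d-(k-r)(1-\rho)\bigr)}\\
&= M\cdot\frac{2d\bigl(k-(k-r)(1-\rho)\bigr)}{k\bigl(2d-(k-r)(1-\rho)\bigr)},
\end{align*}
where the last equality uses $2kd-(k-r)(1-\rho)\bigl(k+(2d-k)\bigr)=2d\bigl(k-(k-r)(1-\rho)\bigr)$. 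Dividing by $k-(k-r)(1-\rho)$ cancels that common factor and leaves $\alpha^{*}_{MBR}=\frac{2Md}{k(2d-(k-r)(1-\rho))}$, as claimed.

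This is essentially a direct computation, so I do not expect a conceptual obstacle; the two places that need care are (i) justifying that the index $i=\frac{k}{r}-1$ is the branch governing the minimum-bandwidth corner, i.e., that $\gamma$ cannot be lowered below $f(\frac{k}{r}-1)$, which is inherited from the min-cut analysis behind Theorem~\ref{theorem 1}, and (ii) tracking the factor $(k-r)(1-\rho)$ through the algebra so that the common factor $k-(k-r)(1-\rho)$ in the numerator and denominator of $\alpha^{*}_{MBR}$ is correctly isolated and cancelled. Setting $r=1$, $\rho=0$ recovers the classical minimum-bandwidth point $\bigl(\frac{2Md}{k(2d-k+1)},\frac{2Md}{k(2d-k+1)}\bigr)$ of Dimakis \emph{et al.}, which is a convenient sanity check.
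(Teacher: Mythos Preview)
Your approach is correct and matches the paper's own (implicit) derivation: Corollary~\ref{corollary2} is stated without a separate proof and is meant to be read off from Theorem~\ref{theorem 1} by taking the last branch $i=\tfrac{k}{r}-1$, exactly as you do.

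One arithmetic inconsistency to clean up. With the value you quote, $g(\tfrac{k}{r}-1)=(2d-k)\tfrac{k-r}{2d}$, the product $g(\tfrac{k}{r}-1)\,\gamma^{*}_{MBR}$ equals $\dfrac{Mr(2d-k)(k-r)(1-\rho)}{k(2d-(k-r)(1-\rho))}$, i.e.\ it carries an extra factor $r$ relative to the numerator you display; your subsequent simplification is instead consistent with $g(\tfrac{k}{r}-1)=(2d-k)\tfrac{k-r}{2rd}$. The latter is in fact what the cut-set computation in Appendix~\ref{appendix1} yields (the coefficient of $\gamma$ there is $\tfrac{i}{2d}(2d-2k+r+ir)$, so the printed $g(i)$ in Theorem~\ref{theorem 1} has a typographical extra $r$). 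Either correct the quoted value of $g(\tfrac{k}{r}-1)$ or flag the typo; once you do, the cancellation of the common factor $k-(k-r)(1-\rho)$ goes through exactly as you describe and the final pair matches the corollary.
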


MSR and MBR codes attain the points in Corollary \ref{corollary1} and Corollary \ref{corollary2}, respectively. 

\begin{remark}
For $\rho=0$ and $r=1$, i.e., complete failure of exactly one node, the model is equivalent to that in \cite{5550492}, and the trade-off curve from Theorem \ref{theorem 1} coincides with the trade-off curve in \cite{5550492}. Similarly, for $\rho=0 \text{ and } r>1$, i.e., multiple complete failures, the trade-off curve from Theorem \ref{theorem 1} coincides with the trade-off curve in \cite{7459908}. 
\end{remark}

Theorem \ref{theorem 1} provides a piecewise linear trade-off curve that defines the optimal storage capacity as a function of the repair bandwidth when $k$ is divisible by $r$, as shown in Fig. \ref{fig:tradeoff}. The curve is linear between points with $\gamma = f(i) $ and $\gamma = f(i-1)$, $ i=1,\ldots, \frac{k}{r}-1$, where $f(i)$ is a decreasing function of $i$ and defines the position of the corner points of the piecewise linear curve. All the points lying above the curve defined by Theorem \ref{theorem 1} are achievable. Corollary \ref{corollary1} defines the MSR point, that is, the point on the trade-off curve that has the lowest feasible storage capacity, while Corollary \ref{corollary2} defines the MBR point, that is, the point on the trade-off curve that has the lowest feasible repair bandwidth (see Fig. \ref{fig:tradeoff}).

\begin{theorem}\label{theorem 2}
In the same context as in Theorem \ref{theorem 1}, if $r$ does not divide $k$, let $p\triangleq \lfloor \sfrac{k}{r} \rfloor$ such that $k_0\triangleq pr$. Find $t^*\in [0:p-2]$ such that $\frac{d-k_0+t^*r}{r}\leq \frac{d-k_0}{k-k_0}\leq \frac{d-k_0+(t^*+1)r}{r}$. Also define $k^\prime \triangleq k\rho + (1-\rho)k_0$. Then the threshold function $\alpha^{*}(n,k,d,\gamma, r, \rho)$ is given by:
\begin{align}
\alpha^{*}=\left\{
                \begin{array}{ll}
                  \frac{M - g(i)\gamma}{k-ir(1-\rho)} \hspace{20.2mm} \gamma \in [f(i), f(i-1)] ,\\
                 \hspace{34mm}0\leq i \leq t^*-1 \\ \\
                  \frac{M - g(t^*)\gamma}{k-t^*r(1-\rho)} \hspace{19.5mm} \gamma \in \left[ f^\prime, f(t^*-1)\right] \\ \\
              \frac{M - [g(t^*)+\frac{d-k_0}{d}]\gamma}{k^\prime -t^*r(1-\rho)} \hspace{6.1mm} \gamma \in \left[ f(t^*), f^\prime \right] \\ \\
              \frac{M - [g(i)+\frac{d-k_0}{d}]\gamma}{k^\prime -ir(1-\rho)} \hspace{7mm} i\geq t^*+1, \\
           \hspace{34.8mm}   \gamma \in \left[ f(i), f(i-1) \right]
                \end{array}
              \right.
\end{align}
where $i={0,1,\ldots , \frac{k}{r}-1}$, and $f, g \text{ and } f^\prime$ are defined as follows:
\begin{align}
f(i)&\triangleq \left\{
\begin{array}{ll}
\infty \hspace{44.7mm} i=-1\\ \\
\frac{2Md(1-\rho)}{(2k-r(i+1)(1-\rho))i+\frac{2k}{r}(d-k_0)} \hspace{10.3mm} i\leq t^*-1 \\ \\
\frac{2Md(1-\rho)}{(2k^\prime-r(i+1)(1-\rho))i+\frac{2k^\prime (d-k_0)}{r}+ d-k_0}  \hspace{8.3mm} i\geq t^*
\end{array}
\right.
\end{align}
\begin{align}
g(i)&\triangleq \left(2d-2k_0+r+ir\right)\frac{ir}{2d} \\ 
f^\prime &\triangleq 
\frac{2Md}{\left[\frac{2(d-k_0)(k-k_0-r)}{k-k_0}+(t^*+1)r\right]t^* + \frac{2k(d-k_0)}{(k-k_0)(1-\rho)}}.
\end{align}
\end{theorem}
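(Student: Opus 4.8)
The plan is to reduce the case $r \nmid k$ to the already-established Theorem 1 by a careful analysis of the min-cut over the family of information flow graphs $\mathcal{G}(n,k,d,\gamma,\alpha,r,\rho)$, tracking the extra constraint that arises because a data collector must eventually contact a group of $k - k_0$ newcomers that does not evenly partition into rounds of $r$. First I would recall from the proof of Theorem 1 (Appendix~\ref{appendix1}) the standard topological-sorting argument: order the active storage vertices a DC connects to according to the information flow graph's partial order, and bound the incremental min-cut contribution of the $j$-th newcomer in a repair round by $\min\{\alpha,\ \alpha_1 + (d - (\text{number of already-counted nodes in that round and earlier rounds}))\beta\}$, using that the $\alpha_1$ surviving bits plus broadcast messages from $d$ helpers, minus the information already accounted for, gate the new information. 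Summing these contributions over all $k$ contacted nodes and minimizing over failure patterns produces the cut-set bound; setting it equal to $M$ and solving for $\alpha$ as a function of $\gamma$ gives the threshold function.

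The key new step is handling the ``remainder round'': when $r \nmid k$, the worst-case DC contacts $p = \lfloor k/r \rfloor$ full rounds of $r$ newcomers plus one partial group of $k - k_0$ newcomers (where $k_0 = pr$). The partial group is optimally inserted at round index $t^*$, which is exactly the value balancing the two ratios in the hypothesis $\frac{d-k_0+t^*r}{r} \le \frac{d-k_0}{k-k_0} \le \frac{d-k_0+(t^*+1)r}{r}$; this is the discrete analogue of equalizing marginal costs, i.e. choosing where to place the short round so that the per-node min-cut contribution is (nearly) flat. I would verify that $t^*$ is well-defined in $[0:p-2]$ under the stated divisibility failure, then compute the cut capacity for this interleaved ordering: the first $t^*$ full rounds contribute as in Theorem 1, the remainder round contributes $(k-k_0)\alpha_1 + (d - k_0)\beta \cdot(\text{appropriate index})$ — which is where the $k' = k\rho + (1-\rho)k_0$ substitution and the extra $\frac{d-k_0}{d}\gamma$ term in $g$ come from — and the remaining full rounds after it shift accordingly. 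Matching the resulting piecewise-linear lower envelope to $M$ and inverting yields the four regimes in the statement, with $f'$ marking the corner where the remainder round's constraint becomes active.

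For the achievability direction, I would invoke Proposition~\ref{prop:1}: since the min-cut of every graph in the family meets or exceeds $M$ precisely when $\alpha \ge \alpha^*$, random linear network coding over a large enough field lets every DC reconstruct, so the points are feasible and linear codes suffice; the converse is immediate from the cut-set bound since no code can exceed the min-cut. The main obstacle I anticipate is the combinatorial optimization identifying $t^*$ and proving that the interleaved ordering with the short round at position $t^*$ is genuinely the global minimizer over all failure/contact patterns — one must rule out that splitting the remainder differently, or distributing the $k - k_0$ nodes across several rounds, does better. This requires a convexity/exchange argument on the marginal contributions: swapping a node between a round of size $r$ and the short round changes the cut by a computable amount whose sign is controlled exactly by the two ratios bracketing $\frac{d-k_0}{k-k_0}$, so $t^*$ is optimal and no finer splitting helps because the marginal cost within a round is monotone in the node's position. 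Verifying the boundary continuity of the four pieces at $\gamma = f(t^*), f', f(t^*-1)$ is then a routine (if tedious) algebraic check that I would relegate to the appendix.
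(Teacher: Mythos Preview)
Your overall framework---follow the Theorem~1 min-cut analysis, isolate the extra ``remainder'' group of $k-k_0$ nodes, then invoke Proposition~\ref{prop:1} for achievability and the cut-set for the converse---matches the paper. But you have misread the role of $t^*$. In the paper's proof the remainder group is \emph{not} inserted at round index $t^*$; it is simply placed \emph{last}, after all $p=\lfloor k/r\rfloor$ full rounds. The min-cut capacity is written directly as
\[
C(\alpha)=\min\{(k-k_0)\alpha,\,(k-k_0)\alpha_1+(d-k_0)\beta\}+\sum_{s=1}^{p}\min\{r\alpha_1+(d-r(s-1))\beta,\,r\alpha\},
\]
and the index $t^*$ only appears when one inverts $C(\alpha)=M$ to obtain $\alpha^*(\gamma)$: it records between which two full-round breakpoints the remainder group's own breakpoint $\alpha=\frac{(d-k_0)\beta}{(k-k_0)(1-\rho)}$ falls. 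The defining inequality $\frac{d-k_0+t^*r}{r}\le\frac{d-k_0}{k-k_0}\le\frac{d-k_0+(t^*+1)r}{r}$ is precisely this ordering of breakpoints, not a first-order optimality condition for placement. So the four regimes and the extra corner at $\gamma=f'$ drop out of a purely algebraic piecewise-linear inversion, just as in Appendix~\ref{appendix1}, with one extra $\min$ term.

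Consequently your proposed convexity/exchange argument is aimed at the wrong target. An exchange argument \emph{is} the natural way to justify that ``remainder last'' is the worst-case failure pattern (the paper asserts this by analogy with Theorem~1 rather than proving it), but the correct conclusion of that argument is ``last,'' not ``position $t^*$'': each repair round, full or partial, blocks all $d$ helper edges in a type-1 cut, yet a full round deposits $r$ nodes into $\bar U$ while the partial round deposits only $k-k_0<r$; placing full rounds earlier therefore maximizes the cumulative reduction of later helper counts and minimizes the total. If you rewrite your proposal with the remainder fixed at the end and reinterpret $t^*$ as the breakpoint index in the inversion, the remaining steps (piecewise inversion, continuity checks at $f(t^*),f',f(t^*-1)$) are exactly what the paper does.
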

\begin{proof}
Proof in Appendix \ref{appendix2}
\end{proof}

Theorem \ref{theorem 2} provides the trade-off curve when $k$ is not divisible by $r$. In this case, there is an additional corner point where $\gamma = f'$ on the piecewise linear trade-off curve where the slope changes, that depends on the capacity contribution of $k-k_0 < r$ nodes. The position of the additional corner point depends on the value of $t^*$ that satisfies the condition $\frac{d-k_0+t^*r}{r}\leq \frac{d-k_0}{k-k_0}\leq \frac{d-k_0+(t^*+1)r}{r}$. 



\section{Code Construction} \label{sec:multiple}

In this section, we present a framework for constructing explicit storage and repair schemes that can achieve storage-repair bandwidth pairs that are on the optimal trade-off curve. First, we provide a few preliminary concepts that are vital for the construction. 

\subsection{Subspace view}
 
Consider that a node stores $S$ linearly independent data packets $y_1, \ldots, y_{S}$ consisting of symbols in the finite field $ \mathbb{F}_{q^l}$. For simplicity, assume that each data packet consists of exactly one symbol in $\mathbb{F}_{q^l}$. Finite field symbols in $\mathbb{F}_{q^l}$ can be viewed as $l-$dimensional vectors over $\mathbb{F}_{q}$. 
Linear operations performed on the stored symbols correspond to linear operations on their vector representations in $\mathbb{F}_{q}$. Hence, we say that the node stores a subspace of dimension $S$, denoted by $W_i = \spn\{y_i\}, i=1,\ldots S$. For a set of nodes denoted by $\mathcal{A}$, the subspace stored by $\mathcal{A}$ is denoted by $W_{\mathcal{A}} = \sum_{i\in \mathcal{A}} W_i$. The sum of two vector spaces $W_1$ and $W_2$ is defined as $W_1 + W_2 = \{ w_1 + w_2: w_1 \in W_1, w_2 \in W_2 \}$. Note that the sum of two vector spaces is not in general equal to their union. We use the notation $\dim(\cdot)$ for the dimension of a vector space. The subspace view of linear storage codes has also been used in previous works like \cite{8638804, 6620243,5962548}.

\subsubsection{Vector space dimension as an information measure}

Consider a sample set of $p$ linearly independent vectors over $\mathbb{F}_q^l$ defined as $\Omega \triangleq \{ w_1,\ldots, w_p \}$, and a function $f: \Omega \rightarrow \mathbb{F}_{q}^{l}$ that generates a random linear combination of a subset of vectors in $\Omega$. Consider a vector space $W$ over $\mathbb{F}_q^l$ generated by the vectors $\{w_1, \ldots, w_p \}$, and a collection $\Sigma$ of subspaces of $W$ that includes $W$, is closed under complement, and is closed under \textit{countable sums} of subspaces. Then the $\sigma$-algebra generated by the function $f$ on $\Sigma$ is given by:
\begin{align}
    \sigma(f) = \left\{ f^{-1}(V): V \in \Sigma \right\},
\end{align}
where $f^{-1}(V)$ provides the smallest pre-image of the subspace $V \in \Sigma$ under $f$. The dimension of a vector space $V\in \Sigma$, defined as the function $\dim : \Sigma \rightarrow \mathbb{N}$, is a measurable function on the space $(\Omega,\sigma(f))$ such that:
\begin{align}
    \dim(V) = \vert f^{-1}(V) \vert, 
\end{align}
where the notation $\vert \cdot \vert$ denotes the cardinality of a set. By considering the dimension of a vector space as an information measure on it, we can, as described in \cite{79902}, formulate identities for the dimension of vector spaces that are similar to those for Shannon information measures. We list a few identities in the next section associated with the dimension of vector subspaces.

\subsection{Identities associated with the dimension of vector spaces}

\begin{enumerate}
\item \textit{Conditional intersection:}
 We define the ``conditional intersection'' of a set of vector spaces $W_i, i=1,\ldots, t$, conditioned on a vector space $W_0$, as the largest subspace in the intersection of the vector spaces $W_i + W_0, i=1,\ldots, t$, after excluding the non-zero vectors belonging to the vector space $W_0$. Thus, we write the dimension of the conditional intersection as follows:
    \begin{align}
   \hspace{-0.3cm} \dim\left(\bigcap_{i=1}^{t} W_i \Big\vert W_0\right) \triangleq \dim\left(\bigcap_{i=1}^{t}(W_i+W_0)\setminus W_0 \right),
\end{align}
where $W_i\setminus W_j$, for two vector spaces $W_i$ and $W_j$, denotes the largest subspace of $W_i$ remaining after removing the non-zero elements belonging to $W_i \cap W_j$ from $W_i$.
For $t=2$, we note that the above identity becomes
\begin{align}
   \nonumber &\dim\left(W_1 \cap W_2 \Big\vert W_0\right)\\
    & \hspace{0.6cm} = \dim\left((W_1+W_0) \cap (W_2 + W_0)\setminus W_0\right)\\
    &\hspace{0.6cm} =\dim\left(W_1 \cap (W_2 + W_0)\setminus W_0\right), \label{reduction}
\end{align}
where Eq. \eqref{reduction} follows due to the following reasoning: consider a vector $w$ belonging to the vector space $(W_1+W_0) \cap (W_2 + W_0)$ given by $w=w_1 + w^{(1)}_0 = w_2 + w^{(2)}_0 $, where $w_1 \in W_1, w_2 \in W_2,$ and $w_0^{(1)}, w_0^{(2)} \in W_0$. Then, we also have $w' \in W_1 \cap (W_2 + W_0)$ given by $w'=w_1=w_2 + (w^{(2)}_0 -  w^{(1)}_0)$, where $w^{(2)}_0 - w^{(1)}_0 \in W_0$. Hence, the dimensions of the vector spaces $(W_1+W_0) \cap (W_2 + W_0)$ and $W_1 \cap (W_2 + W_0)$ are equal. We can also deduce the following \textit{chain rule} from Eq. \eqref{reduction}:
\begin{align}
    \nonumber &\dim\Big( W_1 \cap \left( W_2 + W_3 \right) \Big) = \dim\Big( W_1 \cap W_2 \Big) + \\
    &\hspace{0.5cm}  \dim\Big( W_1 \cap \left(W_3 + W_2 \right) \setminus W_2\Big)\\
    &=\dim\Big( W_1 \cap W_2 \Big) + \dim\Big( W_1 \cap W_3 \Big\vert W_2\Big)
\end{align}

\item $\dim\Big( W_1 + W_2 \Big) = \dim\Big( W_1 \Big)+ \dim\Big( W_2 \Big) - \dim\Big( W_1 \cap W_2 \Big)$.

\item $\dim\Big( W_1 \cap W_2 \Big) = \dim\Big( W_1 \Big) - \dim\Big( W_1 \setminus W_2 \Big)$.

\end{enumerate}

\subsection{Linearized polynomials}
An important component in our construction is the linearized polynomial and its special properties. 
A linearized polynomial 
\begin{align}
    f(x)=\sum_{i=1}^{P}a_i x^{q^{i-1}} ,\ \ \  a_i \in \mathbb{F}_{q^l},
\end{align}
can be uniquely identified from evaluations at any $P$ points $x=\theta_i \in \mathbb{F}_{q^l}, i=1,2,\ldots, P$, that are linearly independent over $\mathbb{F}_q$. The polynomial interpolation problem (that is, to determine the coefficients of $f(x)$ from the evaluations) can be written as
\begin{align}\label{moore}
    \mathbf{Q} \mathbf{a}=\mathbf{y},
\end{align}
where $\mathbf{Q}$ is the Moore matrix corresponding to the evaluation points (\cite{goss2012basic}, Chapter 1.3), $\mathbf{a}=(a_1 , \ldots, a_P)^T$, and $\mathbf{y}=(f(\theta_1),\ldots,f(\theta_P))^T$. For linearly independent evaluation points $\theta_i, i=1,\ldots, P$, $\mathbf{Q}$ is invertible, thus proving the existence of a unique solution for Eq. \eqref{moore}. 

Another relevant property of linearized polynomials is that they satisfy:
\begin{align}
    f(ax+by)=af(x)+bf(y), \hspace{0.5cm} a,b\in \mathbb{F}_q, \hspace{0.3cm} x,y \in \mathbb{F}_{q^l}.
\end{align}
In other words, given a set of points on a linearized polynomial, any linear combination over $\mathbb{F}_q$ of the points also lies on the polynomial.

\subsection{General code construction for any point on the trade-off curve with $\rho=0$ (full-node repair)}\label{general_scheme}
Substituting $i=\frac{k}{r}-\bar{j}, \bar{j}\in [\frac{k}{r}]$ in Theorem \ref{theorem 1}, we obtain the general expression for any point on the optimal storage-repair bandwidth trade-off as 
\begin{align}
(\alpha^*,\gamma^*)=\frac{M}{P^*} \Big(d-(\bar{j}-1)r,rd\Big) , \hspace{1cm} \bar{j} \in \left[\frac{k}{r}\right],
\end{align}
where $P^*=\sfrac{k}{2}\Big(2\left(d-(\bar{j}-1)r\right)-\left(k-r\right)\Big)+r\Big((\bar{j}-1)k-\frac{\bar{j}(\bar{j}-1)}{2}r\Big)$. By considering $\frac{M}{P^{*}}$ as the size in bits of one data packet stored in a node, an optimal scheme stores $d-(\bar{j}-1)r$ data packets in a node, and has $d$ helper nodes broadcasting $rd$ data packets for the repair of $r$ nodes in a repair round. 
Conversely, a scheme that stores $d-(\bar{j}-1)r$ data packets in a node, and has $d$ helper nodes broadcasting $rd$ data packets for the repair of $r$ nodes in a repair round, is optimal if the size of each data packet is $\frac{M}{P^{*}}$. Note that the points on the trade-off curve are parametrized by $\bar{j}$, and are obtained by varying $\bar{j}$.

In the following, we first state three conditions for optimal functional repair. Then, we prove the sufficiency of these conditions. Finally, we construct a general scheme that satisfies these conditions with high probability, and therefore, can achieve functional repair for any point on the trade-off curve.
\subsubsection{Conditions for an optimal scheme}\label{conditions}
The following conditions \textbf{L1},\textbf{L2}, and \textbf{L3}, are sufficient for optimal functional repair, and are described as follows:
\begin{description}
    \item[\textbf{L1:}]  For any set of nodes $\mathcal{A}$ such that $\lvert \mathcal{A} \rvert \leq \bar{j}r$, the following holds :
    $\dim\Big(\sum_{i\in \mathcal{A}} W_i\Big)=\sum_{i\in \mathcal{A}} \dim\Big(W_i\Big)$. 
    This further implies that $\dim\Big(W_{\mathcal{A}_1} \cap W_{\mathcal{A}_2}\Big)=0$, where $\mathcal{A}_1 $ and $\mathcal{A}_2$ are disjoint partitions of $\mathcal{A}$.
    
    \item[\textbf{L2:}] Given a node $A$, and a set of nodes denoted by $\mathcal{B}$ such that $\lvert \mathcal{B} \rvert \leq d-(\bar{j}-1)r$. Partition $\mathcal{B} $ into two disjoint non-empty sets $\mathcal{B}_1$ and $\mathcal{B}_2$. Then the following holds:
    \begin{align}
        &\dim\Big(W_{A} \cap \left(W_{\mathcal{B}_1} + W_{\mathcal{B}_2}\right)\Big)\\
        &= \dim\Big( W_A \cap W_{\mathcal{B}_1} \Big) +\dim\Big( W_A \cap W_{\mathcal{B}_2} \Big).
    \end{align}
    This is equivalent to the following condition: Let $S_{A}^{\mathcal{B}_i}, \mathcal{B}_i \subset \mathcal{B}, i =1,2$, be the subspace broadcasted by node $A$ to repair nodes in $\mathcal{B}_1$ and $\mathcal{B}_2$. Then, the following must hold:
    \begin{align}
    &\dim(S_{A}^{\mathcal{B}_1} \cap S_{A}^{\mathcal{B}_2}) = 0.\label{intersection_condition}
    \end{align}

    \item[\textbf{L3:}] Given a node $A$, and disjoint sets of $r$ nodes denoted by $\mathcal{R}_1,\ldots, \mathcal{R}_{\bar{j}}$, the following holds:
    \begin{align}
        \dim \left(W_A \cap W_{\mathcal{R}_{\bar{j}}} \Big\vert  \sum_{i=1}^{\bar{j}-1} W_{\mathcal{R}_i}\right) \leq r. 
    \end{align}
\end{description}
We now show that the conditions \textbf{L1}, \textbf{L2} and \textbf{L3} are sufficient for optimal functional repair by proving that the reconstruction property is satisfied if these conditions are met by a storage and repair scheme.
\subsubsection{Reconstruction}\label{general_reconstruction}
Suppose a DC accesses the nodes $1,\ldots,k$, denoted by $\mathcal{A}_{dc}$. For correct reconstruction, the data available at the $k$ nodes should span the vector space spanned by the $P$ packets of the file. Therefore, a necessary condition for successful reconstruction is $\dim(W_{\mathcal{A}_{dc}}) \geq P$. It is also a sufficient condition for the reconstruction of the file if the exact linear mapping between the packets available at the $k$ nodes and the $P$ packets of the file is known. In Section \ref{proposed_construction}, we show that if the file packets are encoded with the structure provided by linearized polynomials, the above condition is sufficient for successful reconstruction of the file. In this section, we derive the dimension of the subspace stored by $k$ nodes, assuming that the properties $\mathbf{L1}, \mathbf{L2}$ and $\mathbf{L3}$ are satisfied by the storage nodes. First we propose the following lemma.

\begin{lemma}\label{lemma:1}
Assume that $\mathbf{L1}, \mathbf{L2}$ and $\mathbf{L3}$ are satisfied. Given a node $A$, and a set of $v\leq d$ nodes denoted by $\mathcal{B}$, partition $\mathcal{B}$ into sets of $r$ nodes denoted by $\mathcal{R}_1, \ldots, \mathcal{R}_{\lfloor \sfrac{v}{r} \rfloor} $, and denote the remaining set of nodes by $\mathcal{R}'$. Then,
\begin{align}
    \dim\Big(W_A \cap W_{\mathcal{B}}\Big) = \sum_{s\geq \bar{j}}^{\lfloor\sfrac{v}{r} \rfloor} \dim \left(W_A \cap W_{\mathcal{R}_s} \Big\vert \sum_{t=1}^{\bar{j}-1} W_{\mathcal{R}_t}\right).
\end{align}
\end{lemma}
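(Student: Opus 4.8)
The plan is to prove Lemma~\ref{lemma:1} by induction on $v$, the number of nodes in $\mathcal{B}$, peeling off one block $\mathcal{R}_s$ of $r$ nodes at a time and using the chain rule derived from Eq.~\eqref{reduction} together with the three invariants \textbf{L1}, \textbf{L2}, \textbf{L3}. Write $\mathcal{B} = \mathcal{R}_1 \cup \cdots \cup \mathcal{R}_{\lfloor v/r\rfloor} \cup \mathcal{R}'$, so that $W_{\mathcal{B}} = W_{\mathcal{R}_1} + \cdots + W_{\mathcal{R}_{\lfloor v/r\rfloor}} + W_{\mathcal{R}'}$. The first step is the base case: when $v < \bar{j}r$ the claimed sum on the right is empty, so we must show $\dim(W_A \cap W_{\mathcal{B}}) = 0$. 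This follows from \textbf{L1}: the set $\{A\} \cup \mathcal{B}$ has at most $\bar{j}r$ nodes (we can absorb $A$ since $|\mathcal{B}| = v \le \bar j r -1$), so the subspaces sum directly, and in particular $W_A$ and $W_{\mathcal{B}}$ intersect trivially. More generally, applying \textbf{L1} to $\{A\} \cup \mathcal{R}_1 \cup \cdots \cup \mathcal{R}_{\bar j -1}$ shows that $W_A \cap (W_{\mathcal{R}_1} + \cdots + W_{\mathcal{R}_{\bar j -1}}) = 0$ and more importantly $W_A + \sum_{t=1}^{\bar j -1} W_{\mathcal{R}_t}$ is a direct sum, which will let us treat $\sum_{t=1}^{\bar j -1}W_{\mathcal{R}_t}$ as the conditioning space $W_0$ throughout.

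The inductive step is the heart of the argument. Suppose the formula holds for all node-sets of size less than $v$, and let $|\mathcal B| = v$ with $\lfloor v/r\rfloor = m \ge \bar j$. Split off the last full block: $W_{\mathcal B} = W_{\mathcal B'} + W_{\mathcal R_m}$ where $\mathcal B' = \mathcal R_1 \cup \cdots \cup \mathcal R_{m-1}\cup\mathcal R'$. Apply the chain rule from the identities section:
\begin{align}
\dim\big(W_A \cap W_{\mathcal B}\big) &= \dim\big(W_A \cap W_{\mathcal B'}\big) + \dim\big(W_A \cap W_{\mathcal R_m} \,\big\vert\, W_{\mathcal B'}\big).
\end{align}
The first term is handled by the induction hypothesis (it gives the sum over $s$ from $\bar j$ to $m-1$). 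The work is to show that the conditional term equals $\dim(W_A \cap W_{\mathcal R_m} \,\vert\, \sum_{t=1}^{\bar j -1} W_{\mathcal R_t})$, i.e. that in the conditioning we may discard all blocks except the first $\bar j - 1$. For the $\le$ direction, conditioning on a larger space can only shrink the conditional intersection, so it suffices to observe $\sum_{t=1}^{\bar j-1}W_{\mathcal R_t} \subseteq W_{\mathcal B'}$ — but wait, that inequality goes the wrong way, so actually one uses that \textbf{L3} already bounds the $\sum_{t=1}^{\bar j -1}$-conditioned quantity by $r$ and one must show equality with the $W_{\mathcal B'}$-conditioned one. The clean route is to show the reverse containment of preimages directly: any vector of $W_A$ lying in $W_{\mathcal R_m} + W_{\mathcal B'}$ but not in $W_{\mathcal B'}$ can, using \textbf{L2} applied with the node $A$ (or rather its relevant sub-blocks) and the partition of $\mathcal B'$ into $\{\mathcal R_1,\ldots,\mathcal R_{\bar j -1}\}$ versus the rest, be reduced modulo $\sum_{t\ge \bar j}W_{\mathcal R_t} + W_{\mathcal R'}$ so that only the contribution from $\sum_{t=1}^{\bar j -1}W_{\mathcal R_t}$ survives; \textbf{L2} guarantees the intersection decomposes as a direct sum over the blocks, so no "cross terms" are lost.

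The main obstacle I anticipate is exactly this last reduction: carefully justifying, via \textbf{L2} and the direct-sum structure from \textbf{L1}, that $\dim(W_A \cap W_{\mathcal R_m}\,\vert\, W_{\mathcal B'}) = \dim(W_A \cap W_{\mathcal R_m}\,\vert\, \sum_{t=1}^{\bar j-1}W_{\mathcal R_t})$ — that is, that conditioning on the extra blocks $\mathcal R_{\bar j},\ldots,\mathcal R_{m-1}$ and $\mathcal R'$ contributes nothing new to the intersection with $W_A \cap W_{\mathcal R_m}$. The intuition is that \textbf{L2} says the "repair contributions" of distinct blocks to a fixed node are independent, so once we are intersecting against the single block $\mathcal R_m$, only a fixed $r$-dimensional "budget" conditioned on the first $\bar j - 1$ blocks can matter, and the additional blocks, being in direct sum with everything relevant by \textbf{L1}, drop out. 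I would formalize this by writing every vector in the conditional intersection in terms of block components and repeatedly applying identity (1)'s reduction Eq.~\eqref{reduction} to strip components, invoking \textbf{L2} each time to ensure the stripped component genuinely lies in the span being quotiented. Once this equality is established, summing the telescoped chain-rule identity over $s = \bar j, \ldots, m$ and noting that the blocks $\mathcal R'$ and $\mathcal R_1,\ldots,\mathcal R_{\bar j-1}$ contribute zero completes the induction and the proof.
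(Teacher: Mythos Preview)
Your inductive plan peels off the \emph{last} block $\mathcal{R}_m$ first, which lands you on the identity
\[
\dim\Big(W_A\cap W_{\mathcal{R}_m}\,\Big\vert\, W_{\mathcal{B}'}\Big)=\dim\Big(W_A\cap W_{\mathcal{R}_m}\,\Big\vert\, \sum_{t=1}^{\bar j-1}W_{\mathcal{R}_t}\Big)
\]
as the crux. You flag this yourself as the main obstacle, and your sketch (``write every vector in block components and strip using \textbf{L2}'') is not a proof: \textbf{L2} is an unconditional direct-sum splitting of $W_A\cap(W_{\mathcal{B}_1}+W_{\mathcal{B}_2})$, and there is no monotonicity of the conditional intersection in the conditioning space to fall back on (your own aside already notices that the naive inequality points the wrong way). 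Showing that the extra conditioning blocks $\mathcal{R}_{\bar j},\ldots,\mathcal{R}_{m-1},\mathcal{R}'$ can be discarded is essentially the content of the lemma itself, so the induction as set up is close to circular unless you supply an independent argument for that reduction.

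The paper avoids this entirely by peeling in the opposite order. It applies the chain rule to the \emph{first} $\bar j-1$ blocks: each term $\dim(W_A\cap W_{\mathcal{R}_u}\mid\sum_{t<u}W_{\mathcal{R}_t})$ vanishes by \textbf{L1}, leaving $\dim(W_A\cap W_{\mathcal{B}\setminus(\mathcal{R}_1,\ldots,\mathcal{R}_{\bar j-1})}\mid W_0)$ with the conditioning $W_0=\sum_{t<\bar j}W_{\mathcal{R}_t}$ fixed once and for all. From there \textbf{L2}, applied to the remaining $v-(\bar j-1)r\le d-(\bar j-1)r$ nodes, splits off $\mathcal{R}_{\bar j}$, then $\mathcal{R}_{\bar j+1}$, and so on recursively, each time keeping the \emph{same} conditioning $W_0$; the residual $\mathcal{R}'$ term is again zero by \textbf{L1}. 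No conditioning-reduction step is ever needed. Your route is not wrong in spirit, but the order of operations is the whole point: use \textbf{L1} to fix the conditioning first, then use \textbf{L2} to split.
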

\begin{proof}
Proof in Appendix \ref{appendix3}.
\end{proof}

\begin{theorem}\label{reconstruction_thm}
If a DC accesses $k$ nodes, denoted by $\mathcal{A}_{dc}$, $k \leq d$, then, assuming that $\mathbf{L1}, \mathbf{L2}$ and $\mathbf{L3}$ are satisfied, we have
\begin{align}
    \dim(W_{\mathcal{A}_{dc}}) \geq P^{*},
\end{align}
thus satisfying the reconstruction property with optimal storage and repair bandwidth.
\end{theorem}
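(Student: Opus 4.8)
The plan is to compute a lower bound on $\dim(W_{\mathcal{A}_{dc}})$ by telescoping over the $k$ nodes accessed by the DC, applying Lemma~\ref{lemma:1} repeatedly, and then using the structural conditions \textbf{L1}, \textbf{L2}, \textbf{L3} to bound each incremental term from below. Concretely, since $r \mid k$, I would order the $k$ nodes and partition $\mathcal{A}_{dc}$ into $k/r$ groups of $r$ nodes each, say $\mathcal{R}_1,\ldots,\mathcal{R}_{k/r}$. Writing $W_{\mathcal{A}_{dc}} = \sum_{s=1}^{k/r} W_{\mathcal{R}_s}$ and using identity~(2) (the inclusion--exclusion rule for dimension) in telescoped form, one gets
\begin{align}
\dim(W_{\mathcal{A}_{dc}}) = \sum_{s=1}^{k/r} \dim(W_{\mathcal{R}_s}) - \sum_{s=2}^{k/r} \dim\!\left(W_{\mathcal{R}_s} \cap \sum_{t=1}^{s-1} W_{\mathcal{R}_t}\right).
\end{align}
The first sum is $\frac{k}{r}\cdot r\cdot\big(d-(\bar j-1)r\big) = k\big(d-(\bar j-1)r\big)$ by \textbf{L1} applied within each group together with the fact that each node stores a subspace of dimension $d-(\bar j-1)r$. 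The work is then to upper-bound each intersection term $\dim\big(W_{\mathcal{R}_s}\cap \sum_{t<s}W_{\mathcal{R}_t}\big)$.

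For that I would first treat $\mathcal{R}_s$ as a union of its $r$ single nodes and apply the chain rule / submodularity to reduce the group-vs-prefix intersection to a sum of single-node-vs-prefix intersections; for each single node $A\in\mathcal{R}_s$, Lemma~\ref{lemma:1} (with $\mathcal{B}$ the prefix $\mathcal{R}_1\cup\cdots\cup\mathcal{R}_{s-1}$, which has $v=(s-1)r$ nodes) rewrites $\dim(W_A\cap W_{\mathcal{B}})$ as $\sum_{j=\bar j}^{s-1}\dim\big(W_A\cap W_{\mathcal{R}_j}\,\big|\,\sum_{t=1}^{\bar j-1}W_{\mathcal{R}_t}\big)$, and each conditional-intersection term is at most $r$ by \textbf{L3}. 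Hence for each group index $s$ the intersection with the prefix is at most (roughly) $r\cdot(s-\bar j)\cdot r$ when $s>\bar j$, and $0$ when $s\le \bar j$ (here \textbf{L1} is what kills the first $\bar j$ groups entirely, since any $\bar j r$ nodes together are independent). Summing the resulting bound over $s=\bar j+1,\ldots,k/r$ gives a closed-form quantity that I would then simplify and check equals exactly $k\big(d-(\bar j-1)r\big) - P^{*}$, so that $\dim(W_{\mathcal{A}_{dc}})\ge P^{*}$. The arithmetic identity to verify is, using $P^*=\frac{k}{2}\big(2(d-(\bar j-1)r)-(k-r)\big)+r\big((\bar j-1)k-\frac{\bar j(\bar j-1)}{2}r\big)$, that $\sum_{s=\bar j+1}^{k/r} r^2(s-\bar j) = r^2\binom{k/r-\bar j+1}{2}$ matches $k(d-(\bar j-1)r)-P^*$; this is a routine but slightly delicate computation.

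The main obstacle I anticipate is the bookkeeping in reducing a \emph{group}-versus-prefix intersection to single-node terms: $\dim\big(W_{\mathcal{R}_s}\cap\sum_{t<s}W_{\mathcal{R}_t}\big)$ is not literally $\sum_{A\in\mathcal{R}_s}\dim\big(W_A\cap\sum_{t<s}W_{\mathcal{R}_t}\big)$, so one must use the chain rule (conditioning successively on the already-counted nodes of $\mathcal{R}_s$) and argue that conditioning on extra nodes of $\mathcal{R}_s$ only decreases each term, while still being able to invoke \textbf{L3} and Lemma~\ref{lemma:1} on the conditioned quantities. Properly, this is where \textbf{L2} enters — it is the statement that lets the helper-subspace contributions from disjoint node-sets add up without overlap, which is exactly what is needed to ensure the per-node conditional intersections do not double-count. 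I would carry out this reduction carefully, then feed the per-node bound into the telescoped sum and finish with the algebraic identity above. A secondary subtlety is confirming that the hypothesis $k\le d$ is exactly what guarantees Lemma~\ref{lemma:1} applies at every step (since the prefix never exceeds $d$ nodes), and that $\mathbf{L2}$'s size constraint $|\mathcal{B}|\le d-(\bar j-1)r$ and $\mathbf{L1}$'s constraint $|\mathcal{A}|\le\bar j r$ are never violated in the groups we form.
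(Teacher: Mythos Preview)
Your plan is correct and is essentially the same argument as the paper's, with one organizational difference worth noting: the paper telescopes directly over the $k$ individual nodes rather than over the $k/r$ groups. That is, it writes
\[
\dim\Big(\sum_{i=1}^{k}W_i\Big)=\sum_{i=1}^{k}\Big[\dim(W_i)-\dim\Big(W_i\cap\sum_{u<i}W_u\Big)\Big],
\]
applies Lemma~\ref{lemma:1} immediately to each single-node term $\dim(W_i\cap\sum_{u<i}W_u)$ with $\mathcal{B}=\{1,\dots,i-1\}$ (so $\lfloor(i-1)/r\rfloor$ full groups and a remainder $\mathcal{R}'$ that Lemma~\ref{lemma:1} kills via \textbf{L1}), and then bounds each conditional piece by $r$ via \textbf{L3}. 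This yields exactly your sum $r^2\binom{k/r-\bar j+1}{2}$, and your algebraic check against $k(d-(\bar j-1)r)-P^*$ is correct.

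The point is that the ``main obstacle'' you flag --- reducing a group-versus-prefix intersection to single-node terms --- simply does not arise if you telescope node-by-node from the outset; the extra already-counted nodes of $\mathcal{R}_s$ then land in the remainder set $\mathcal{R}'$ of Lemma~\ref{lemma:1} and contribute zero automatically. Your phrasing that ``conditioning on extra nodes of $\mathcal{R}_s$ only decreases each term'' is not quite the right justification (enlarging the set you intersect with can only increase the dimension); the actual reason the extra nodes are harmless is precisely that Lemma~\ref{lemma:1} absorbs them into $\mathcal{R}'$ with $|\mathcal{R}'|<r$, which \textbf{L1} then zeroes out. With that one correction, your argument and the paper's coincide.
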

\begin{proof}
We have
\begin{align}
    &\dim\Big(\sum_{i=1}^{k}W_{i}\Big) \\
    &= \sum_{i=1}^{k} \left[ \dim\Big(W_i\Big)  - \dim\Big(W_i \cap \sum_{u=1}^{i-1}W_{u}\Big)\right] \\
    &= \sum_{i=1}^{k} \dim\Big(W_i\Big) - \\
    & \hspace{1cm} \sum_{i=1}^{k} \sum_{s=\bar{j}}^{\lfloor \sfrac{(i-1)}{r}\rfloor} \dim\Big(W_i \cap W_{\mathcal{R}_s}\Big\vert \sum_{t=1}^{\bar{j}-1}W_{\mathcal{R}_t}\Big) \label{stepf}
\end{align}
\begin{align}  
    &\overset{}{\geq} k(d-(\bar{j}-1)r)- \left( r(r) + \cdots + r(k-\bar{j}r) \right)\\
    &= \frac{k}{2}\left(2(d-(\bar{j}-1)r)-(k-r)\right) + \\
    & \hspace{1cm} r\left((\bar{j}-1)k-\frac{\bar{j}(\bar{j}-1)}{2}r\right)\\
    &= P^{*}
\end{align}
where Eq. \eqref{stepf} follows from Lemma \ref{lemma:1}. \end{proof}
Therefore, a storage and repair scheme that divides the file into $P$ data packets, and satisfies $\mathbf{L1}, \mathbf{L2}$ and $\mathbf{L3}$, achieves the optimal tradeoff between storage and repair bandwidth, by setting $P=P^{*}$. In the following section, we propose a scheme that satisfies $\mathbf{L1}, \mathbf{L2}$ and $\mathbf{L3}$ with high probability.
\subsubsection{Proposed code construction} \label{proposed_construction}
A file of size $M$ bits is divided into 
$P$ data packets denoted by $m_1,\ldots, m_P$. For convenience and without loss of generality, we assume that each packet consists of exactly one symbol in $\mathbb{F}_{q^l}$. Define the linearized polynomial 
\begin{align}\label{linearized_poly_code}
 f(x)=\sum_{i=1}^{P}m_i x^{q^{i-1}} , \hspace{0.6cm} m_i\in \mathbb{F}_{q^l}, 
\end{align}
in a finite field $\mathbb{F}_{q^{l}}, l \geq P$. If a DC receives evaluations of the polynomial $f(x)$ on any $P$ points in $\mathbb{F}_{q^l}$ that are linearly independent over $\mathbb{F}_q$, it can reconstruct $f(x)$ by interpolation, and thus reconstruct the file. For the rest of the section, we shall refer to evaluations of $f(x)$ on a set of linearly independent evaluation points as \textit{linearly independent evaluations}. We propose a general scheme parameterized by $\bar{j}$ that achieves the points on the optimal storage-repair bandwidth trade-off with a high probability. Each node stores $d-(\bar{j}-1)r$ linearly independent evaluations of $f(x)$.

We set the size of the finite field to $q^l$, where $l\geq (n-r)(d-(\bar{j}-1)r)$, and store $d-(\bar{j}-1)r$ linearly independent evaluations of $f(x)$ on the nodes $1,\ldots, n-r$. Subsequently, the contents of the remaining $r$ nodes are generated by the nodes $1,\ldots,d$ by using the repair scheme described in the next subsection, as if the $d$ nodes are helper nodes repairing the nodes $n-r+1,\ldots, n$. This ensures that the conditions \textbf{L1}, \textbf{L2}, and \textbf{L3} are satisfied in the initial storage round.

\subsubsection{Repair scheme} \label{repair_scheme}
The following repair scheme satisfies the properties $\mathbf{L1}, \mathbf{L2}$ and $\mathbf{L3}$, after an arbitrary number of repair rounds, with high probability. During repair, the $d$ helper nodes, denoted by the set $\mathcal{H}$, transmit $r$ packets each to repair $r$ newcomers, enumerated as $\{n_1, \ldots, n_r\} = \mathcal{N}$. Node $h_i \in \mathcal{H}$, for $i=1,\ldots, d$, transmits $r$ random linear combinations of packets in set $\mathcal{A}, \vert \mathcal{A} \vert = r+e, 0\leq e \leq d-\bar{j}r$, where set $\mathcal{A}$ consists of $r+e$ packets sampled randomly from the  packets stored in node $h_i$, and $e$ is a free parameter that can be tuned to optimize the performance. The packets transmitted by the helper nodes $h_i$ in set $\mathcal{H}$, enumerated as $w_{h_i,1},\ldots, w_{h_i,r}$, are received and arranged by each newcomer in a matrix $\mathbf{Y}$ of dimensions $(d-(\bar{j}-1)r)\times \bar{j}r$ in the following manner:
\begin{align}
     \mathbf{Y}^T= \left[ \begin{array}{cccc}
        w_{h_1,1} & w_{h_2,1} & \cdots & w_{h_{d-(\bar{j}-1)r},1} \\ 
        \vdots & \ddots & \ddots & \vdots \\
         w_{h_1,r} & w_{h_2,r} & \cdots & w_{h_{d-(\bar{j}-1)r},r} \\
        w_{h_{r+1},1} & w_{h_{r+2},1} & \cdots & w_{h_{d-(\bar{j}-2)r},1} \\
        \vdots & \ddots & \ddots & \vdots \\
        w_{h_{r+1},r} & w_{h_{r+2},r} & \cdots & w_{h_{d-(\bar{j}-2)r},r} \\
        w_{h_{2r+1},1} & w_{h_{2r+2},1} & \cdots & w_{h_{d-(\bar{j}-3)r},1} \\
        \vdots & \ddots & \ddots & \vdots \\
        w_{h_{(\bar{j}-1)r+1},r} & \cdots & \cdots & w_{h_d,r}
    \end{array} \right].\label{permuted_Y}
\end{align}
Note that all the $r$ packets received from the $d$ helper nodes are present in matrix $\mathbf{Y}$ with a certain symmetrical arrangement.

Now, indexing the rows of matrix $\mathbf{Y}^T$ from $0$ to $\bar{j}r-1$, the following rotation operation is done on each row of $\mathbf{Y}^T$:
\begin{align}
     \text{ Rotate}_{(g \text{ mod }r)}\left( \text{row }g \right), \hspace{0.5cm} g=0,\ldots, \bar{j}r-1,
\end{align}
where the function $Rotate_{\sigma}(v)$ applies a circular rotation to the vector $v$ by $\sigma$ positions. Thus, newcomer $n_i, i = 1,\ldots, r$, obtains matrix $\mathbf{Y}$ such that each row contains packets from $\bar{j}r$ distinct helper nodes. 

Newcomer $n_i$ computes random linear combinations over $\mathbb{F}_q$ of the $\bar{j}r$ packets in each row of $\mathbf{Y}$, and stores the resultant packets in its memory.



In the following, we argue that the conditions \textbf{L1}, \textbf{L2} and \textbf{L3} are satisfied by the above repair scheme with high probability.
\begin{description}
    \item[\textbf{L1:}] Consider a non-zero random linear combination $v_{p+1}$ of a set $V$ of $p$ linearly independent vectors $v_1,\ldots, v_p$. Then, the set of vectors $\{v_{p+1}\}\cup V', V'\subset V, \vert V' \vert = p-1 $ forms a basis. Moreover, a set containing $t$ non-zero random linear combinations of the $p$ vectors in $V$, and any $p-t$ vectors in $V$, forms a basis with a very high probability if the field size is sufficiently large. Therefore, since each newcomer stores a random linear combination of $\bar{j}r$ packets received from $\bar{j}r$ distinct nodes, the property \textbf{L1} is satisfied with a high probability which approaches $1$ if $q$, the size of the base field, is sufficiently large. 
    
    \item[\textbf{L2:}] Helper $h_i$ transmits $r$ random linear combinations of $r+e$ linearly independent packets in each repair round to repair a group of $r$ newcomers. Consider that the helper $h_i$ repairs two disjoint sets of $r$ newcomers $\mathcal{R}_1$ and $\mathcal{R}_2$. The packets transmitted by $h_i$ can be written as $\mathbf{U}\mathbf{T} = \mathbf{U}\left[ \mathbf{T}_1\ \ \mathbf{T}_2 \right]$, where $\mathbf{U} \in \mathbb{F}_q^{l \times (d-(\bar{j}-1)r)}$ is the matrix representation in the base field $\mathbb{F}_q$ of the packets stored in $h_i$, and $\mathbf{T}_1, \mathbf{T}_2 \in \mathbb{F}_q^{(d-(\bar{j}-1)r)\times r}$, having $r+e$ non-zero rows each, capture the coefficients of the linear combinations of the $r+e$ packets transmitted for the repair of $\mathcal{R}_1$ and $\mathcal{R}_2$ respectively. The sufficient condition for Eq. \eqref{intersection_condition} to hold for $\mathcal{B} = \mathcal{R}_1 \cup \mathcal{R}_2$ is that matrix $\mathbf{T}$ should have full column rank. The probability that $\mathbf{T}$ has full column rank is given by the probability that $\mathbf{T}$ has at least $2r$ non-zero rows. Therefore, the probability that $\mathbf{T}$ has full column-rank is given by:
    \begin{align}
      \nonumber & \text{Pr}(\mathbf{T}\text{ has full column-rank}) \\
       & \hspace{0.5cm} = \frac{\sum_{i=0}^{2e} {r+e\choose i}{d-(\bar{j}-1)r - (r+e)\choose r+e-i}}{{d-(\bar{j}-1)r\choose r+e}}.
    \end{align}
    Consequently, if $r\geq 1$ and $e=0$, we have
    \begin{align}
        \text{Pr}(\mathbf{T}\text{ has full column-rank}) &= \frac{ {d-(\bar{j}-1)r - r\choose r}}{{d-(\bar{j}-1)r\choose r}}
        \end{align}
        \begin{align}
        \hspace{0.9cm} = \prod_{i=1}^{r}  \frac{d-(\bar{j}-1)r -2r + i}{d-(\bar{j}-1)r -r + i},
    \end{align}
    which is close to $1$ if $d$ is sufficiently large. As $e$ increases, the probability of $\mathbf{T}$ having a full column-rank increases.
    If $e\geq r$, matrix $\mathbf{T}$ has full column rank with probability 1. In general, the probability of matrix $\mathbf{T} \in \mathbb{F}_q^{(d-(\bar{j}-1)r)\times \vert \mathcal{B} \vert}$ having full column-rank can be made as high as desired if $e$ is set to an appropriate value depending on the parameters $d$ and $r$.
    
    \item[\textbf{L3:}] 
  
    This property holds for the described scheme because each helper node delivers $r$ linearly independent packets to the newcomers, which are then linearly combined with the packets received from $\bar{j}-1$ other helper nodes which are also assumed to be linearly independent. Therefore, given the packets from the $\bar{j}-1$ helper nodes, node $A$ and the $r$ nodes in the set $\mathcal{R}_1$ only have $r$ packets in common.
\end{description}

\subsection{MSR point with $\rho=0$}
To achieve the MSR point (see Corollary \ref{corollary1}), we set $\bar{j}=\frac{k}{r}$ in the code construction in Section \ref{general_scheme}. Thus the file is divided into $P=P^{*}=k(d-k+r)$ data packets. Each node stores $d-k+r$ linearly independent points on $f(x)$. 
$\mathbf{Y}$ is of dimensions $(d-k+r)\times k$, and thus each newcomer computes random linear combinations of the $k$ packets in the rows of $\mathbf{Y}$. 
In this manner, an MSR code is constructed with a subpacketization level of $S=d-k+r$ that scales linearly with $d,k$ and $r$.

\subsection{MBR point with $\rho=0$}
To achieve the MBR point (see Corollary \ref{corollary2}), we set $\bar{j}=1$ in the code construction in Section \ref{general_scheme}. The file is divided into $P=P^{*}=\frac{k}{2}(2d-k+r)$ data packets. Each node stores $d$ linearly independent points on $f(x)$. 
$\mathbf{Y}$ is of dimensions $d\times r$, and each newcomer computes random linear combinations of the $r$ packets in the rows of $\mathbf{Y}$. 

\subsection{Code construction for any point on the trade-off curve with $\rho>0$ (partial repair)} \label{sec:partial}

In this section, we propose the sufficient conditions for an optimal partial repair scheme, and then propose an extension of the repair scheme from the previous section to achieve optimal partial node repair performance with high probability. Substituting $i=\frac{k}{r}-\bar{j}, \bar{j}\in [\frac{k}{r}]$ in Theorem \ref{theorem 1}, we obtain the general expression for any point on the optimal storage-repair bandwidth trade-off for partial repair as 
\begin{align}
(\alpha^*,\gamma^*)=\frac{M}{P^*} \Big(d-(\bar{j}-1)r,rd(1-\rho)\Big) , \hspace{0.2cm} \bar{j} \in \left[\frac{k}{r}\right],
\end{align}
where $P^*=\frac{k}{2}\left(2(d-(\bar{j}-1)r)-(1-\rho)(k-r)\right) +  r(1-\rho)\left((\bar{j}-1)k-\frac{\bar{j}(\bar{j}-1)}{2}r\right)$. An optimal partial repair scheme divides a file into $P^{*}$ data packets, and each node stores $(d-(\bar{j}-1)r)$ coded packets. Instead, we consider that, for $\xi \in \mathbb{N}$, such that $\rho \xi \in \mathbb{N}$, the file is divided into $\xi P^{*}$ data packets, and each node stores $(d-(\bar{j}-1)r)\xi$ coded packets. The storage capacity and the repair bandwidth achieved is optimal for any arbitrary $\xi$.

In the following, we first state three conditions for optimal functional repair of partially failed nodes, and then prove their sufficiency. In Section \ref{code_partial}, we propose a general scheme that satisfies these conditions with high probability, and therefore achieve functional repair for any point on the trade-off curve.
\subsubsection{Conditions for an optimal scheme}\label{conditions_partial}
The properties $\textbf{L1}$ and $\textbf{L2}$ remain the same as in Section \ref{general_scheme}. The property $\mathbf{L3}$ is described as follows:
\begin{description}
     \item[\textbf{L3:}] Given a node $A$, and $\bar{j}$ disjoint sets of $r$ nodes denoted by $\mathcal{R}_1,\ldots, \mathcal{R}_{\bar{j}}$, the following property holds:
         \begin{align}
        \dim \left(W_A \cap W_{\mathcal{R}_{\bar{j}}} \Big\vert  \sum_{t=1}^{\bar{j}-1} W_{\mathcal{R}_t}\right) \leq (1-\rho)r\xi. 
    \end{align}
\end{description}

We now show that the above conditions are sufficient for optimal functional repair.
\subsubsection{Reconstruction}\label{reconstruction_partial}
Given that $\mathbf{L1}, \mathbf{L2}$ and $\mathbf{L3}$ are satisfied, the dimension of the space obtained by a DC accessing any $k$ nodes is given by
\begin{align}
    &\dim\Big(\sum_{i=1}^{k}W_{i}\Big) = \sum_{i=1}^{k} \left[ \dim\Big(W_i\Big) - \right.\\
    & \hspace{3.6cm} \left. \dim\Big(W_i \cap \sum_{u=1}^{i-1}W_{u}\Big)\right] \\
    &\overset{}{=} \sum_{i=1}^{k} \dim\Big(W_i\Big) - 
\end{align}
\begin{align}
    & \hspace{1cm}  \sum_{i=1}^{k} \sum_{s=\bar{j}}^{\lfloor \sfrac{(i-1)}{r}\rfloor} \dim\Big(W_i \cap W_{\mathcal{R}_s}\Big\vert \sum_{t=1}^{\bar{j}-1}W_{\mathcal{R}_t}\Big)\\
    \nonumber &\overset{}{\geq} k\xi \Big(d-(\bar{j}-1)r\Big)- \Big( (1-\rho)\xi r(r) + \\ 
   & \hspace{4cm}  \cdots + (1-\rho)\xi r(k-\bar{j}r) \Big)\\
  \nonumber  &= k\xi \Big(d-(\bar{j}-1)r\Big)- \xi (1-\rho) \Big( r(r) + \cdots + r(k-r) \Big)\\
  &+ \xi (1-\rho)r\Big( \left(k-(\bar{j}-1)r\right)+  \cdots + (k-r)\Big)\\
   &= \frac{k\xi}{2}\left(2(d-(\bar{j}-1)r)-(1-\rho)(k-r)\right)\\
   & \hspace{2cm} + \xi r(1-\rho)\left((\bar{j}-1)k-\frac{\bar{j}(\bar{j}-1)}{2}r\right),
\end{align}
which is equal to $P^{*}$, thus proving optimality.
Therefore, if we have a repair scheme for which $\mathbf{L1}, \mathbf{L2}$ and $\mathbf{L3}$ are satisfied after an arbitrary number of repair rounds, the value of $P$ can be set to $P^{*}$, thus achieving the optimal performance in terms of storage and repair bandwidth. 

\subsubsection{Proposed code construction}\label{code_partial}
The general code construction for partial repair is an extension of the one presented in Section \ref{general_scheme}. 

We consider that, for $\xi \in \mathbb{N}$, such that $\rho \xi \in \mathbb{N}$, the file is divided into $\xi P^{*}$ data packets, and each node stores $(d-(\bar{j}-1)r)\xi$ coded packets. Thus, when $(1-\rho)(d-(\bar{j}-1)r)\xi$, where $ 0\leq \rho < 1$, packets are erased on each of the $r$ faulty nodes, we have an integer number of erased packets, assuming that $\rho \xi \in \mathbb{N}$. The linearized polynomial $f(x)$ is constructed with these packets as coefficients, similarly to Eq. \eqref{linearized_poly_code}. Node $i, i=1,\ldots, n$, stores $S=\Big(d-(\bar{j}-1)r\Big)\xi$ linearly independent evaluations of $f(x)$, enumerated as $w_{i,j}, j=1,\ldots, (d-(\bar{j}-1)r)\xi$. 
\subsubsection{Repair scheme}
We assume that $(1-\rho)(d-(\bar{j}-1)r)\xi$ packets are erased on each of the $r$ faulty nodes. Consider that the indices of the $r$ faulty nodes are denoted by the set $\mathcal{N}$, and the indices of the helper nodes are denoted by $\mathcal{H}$. 
During repair, the helper node $ h \in \mathcal{H}$ transmits $(1-\rho)\xi r$ random linear combinations of a set of $(1-\rho) (r+e)\xi $ randomly sampled packets. 
The number of unerased packets on a faulty node $n_i \in \mathcal{N}, i=1,\ldots, r$, is given by $\rho \xi\left(d-(\bar{j}-1)r\right)$. A set of $ r\xi \rho $ packets are randomly sampled from the unerased packets in the faulty node $n_i$, which are added to the $(1-\rho) r\xi $ packets transmitted by a helper node, thus making $r\xi$ packets per helper node. Faulty node $n_i$ arranges these packets to form a matrix $\mathbf{Y}$ of dimensions $(d-(\bar{j}-1)r)\xi \times \bar{j}r$, such that each row contains packets from $\bar{j}r$ distinct helper nodes. The conditions \textbf{L1}, \textbf{L2} and $\textbf{L3}$ are satisfied with high probability, and the rest of the repair scheme proceeds similar to that in Section \ref{repair_scheme}.

\section{Results and Discussion} \label{sec:rd}

\begin{figure}[t]
    \centering
    \includegraphics[scale=0.87]{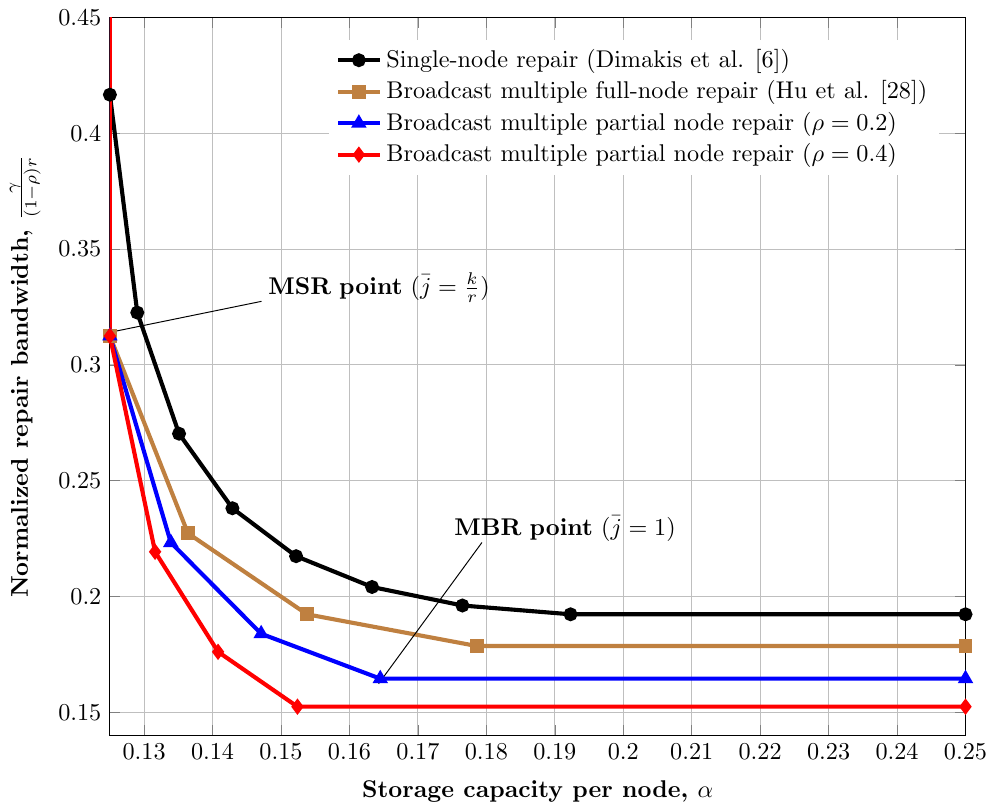}
    \caption{Trade-off curve between the repair bandwidth and storage, $M=1,k=8$ and $d=10$ helper nodes. For single node failure $r=1$ and for multiple node failures $r=2$.} \label{fig:tradeoff}
\end{figure}

\subsection{Optimal trade-off curve}
In Fig. \ref{fig:tradeoff}, we plot the storage vs. repair bandwidth trade-off for single-node repair \cite{5550492}, broadcast repair of multiple full node failures \cite{7459908}, and partial repair of multiple nodes. Fig. \ref{fig:tradeoff} illustrates that utilizing the unerased portion of data on a failed node reduces the repair bandwidth significantly. The repair bandwidth is normalized by the number of failed nodes, and the fraction of data that is erased, in order to make a fair comparison with single node repair and full node repair respectively. We call it the \textit{normalized repair bandwidth}. We observe that the MSR point does not improve over full node repair even with partial repair, that is, the unerased data in a faulty node does not help in the repair of the erased packets in any meaningful way for the MSR point. However, the normalized repair bandwidth decreases significantly at all other points on the trade-off curve.

\subsection{Verifying the reconstruction property}

We ran experiments using the \textit{sagemath} python library \footnote[1]{Code available at: \url{https://github.com/nitishmital/functional-repair}} to verify the preservation of the reconstruction property, that is, $\dim(W_{\mathcal{A}_{dc}}) \geq P^{*}$ over multiple repair rounds. We initially store $(d-(\bar{j}-1)r)(n-r)$ linearly independent evaluations of the encoding linearized polynomial on $n-r$ storage nodes, and use the repair scheme to populate the remaining $r$ nodes with content. We run $100$ repair rounds, where the indices of the failing nodes and the helper nodes are sampled uniform randomly from the $n$ nodes. We then compute the dimensions of the subspace obtained from sets of $k$ nodes sampled randomly in each of $50$ trials, and record the minimum dimension from the subspaces obtained from those sets of $k$ nodes, as well as the average dimension over those sets of $k$ nodes observed in the trials. It is observed that the value of $\dim(W_{\mathcal{A}_{dc}})$ decreases as the nodes go through repair rounds, and approaches $P^*$ asymptotically. Table \ref{results} records the results of experiments with multiple parameter combinations, for different points on the corresponding trade-off curves for those parameters, using the proposed scheme for full-node repair. We observe that $\dim(W_{\mathcal{A}_{dc}}) \geq P^{*}$ is satisfied for arbitrary parameter combinations if $e$ and $q$ are chosen appropriately.

\begin{table}[htb]
\begin{center}
\caption{Results of experiments with multiple parameter combinations verifying the reconstruction property.} \label{results}
 \begin{tabular}{| c c c c | c | c c c c c||} 
 \hline 
$n$ & $k$ & $d$ & $r$ & $\bar{j}$ & $q$ & $e$ & $P^{*}$ & \text{min} & avg \\ [0.5ex] 
 \hline\hline
 \multirow{3}{*}{ 27 } & \multirow{3}{*}{ 15 } & \multirow{3}{*}{ 17 } & \multirow{3}{*}{5} & 1 & 29 & 0 & 180 & 191 & 191 \\
& & & & 2 & 29 & 0 & 155 & 158 & 158 \\
& & & & 3 & 257 & 2 & 105 & 105 & 105\\
 \hline 
\multirow{4}{*}{ 24 } & \multirow{4}{*}{ 16 } & \multirow{4}{*}{ 16 } & \multirow{4}{*}{4} & 1 & 29 & 1 & 160 & 170 & 170 \\
 & & & & 2 & 29 & 1 & 144 & 149 & 149 \\
 & & & & 3 & 29 & 1 & 112 & 114 & 114 \\
 & & & & 4 & 29 & 0 & 64 & 64 & 64 \\
 \hline
\multirow{3}{*}{ 20 } & \multirow{3}{*}{ 12 } & \multirow{3}{*}{ 12 } & \multirow{3}{*}{4} & 1 & 29 & 1 & 96 & 96 & 96 \\
& & & & 2 & 29 & 1 & 80 & 84 & 84.98 \\
& & & & 3 & 29 & 0 & 48 & 48 & 48 \\
\hline

\multirow{4}{*}{ 16 } & \multirow{4}{*}{ 12 } & \multirow{4}{*}{ 12 } & \multirow{4}{*}{3} & 1 & 1021 & 3 & 90 & 91 & 91 \\
& & & & 2 & 1021 & 3 & 81 & 83 & 83 \\
 & & & & 3 & 257 & 3 & 63 & 63 & 63 \\
 & & & & 4 & 257 & 0 & 36 & 36 & 36 \\

\hline 
 
\multirow{4}{*}{ 16 } & \multirow{4}{*}{ 8 } & \multirow{4}{*}{ 11 } & \multirow{4}{*}{2} & 1 & 29 & 1 & 64 & 65 & 74.1 \\
 & & & & 2 & 29 & 1 & 60 & 64 & 68.84 \\
 & & & & 3 & 29 & 1 & 52 & 54 & 55.56 \\
 & & & & 4 & 29 & 1 & 40 & 40 & 40 \\
 \hline
\multirow{5}{*}{ 14 } & \multirow{5}{*}{ 10 } & \multirow{5}{*}{ 10 } & \multirow{5}{*}{2} & 1 & 29 & 2 & 60 & 60 & 60 \\
 & & & & 2 & 29 & 1 & 56 & 57 & 57.96 \\
 & & & & 3 & 29 & 2 & 48 & 49 & 49 \\
 & & & & 4 & 29 & 2 & 36 & 36 & 36.92 \\
 & & & & 5 & 127 & 0 & 20 & 20 & 20 \\
\hline
\multirow{2}{*}{ 9 } & \multirow{2}{*}{ 6 } & \multirow{2}{*}{ 6 } & \multirow{2}{*}{3} & 1 & 1021 & 3 & 27 & 27 & 27 \\
 & & & & 2 & 1021 & 0 & 18 & 18 & 18 \\
 \hline
\end{tabular}
\end{center}
\end{table}

\subsection{Subpacketization}

As described in Section \ref{sec:rw}, existing works mainly consider exact repair on the MSR and MBR points, but not the intermediate points on the trade-off curve. Existing constructions for repairing Reed Solomon (RS) codes that meet the cut-set bound employ an exponential in $n$ subpacketization \cite{10.1145/2897518.2897525,Tamo2017OptimalRO,8717612,7457282}. The construction in \cite{Tamo2017OptimalRO} achieves the cut-set bound for the MSR point but requires a subpacketization $S \approx n^n$. For combination no. 1 in Table \ref{results} , where $n=27$, the subpacketization required using the scheme in \cite{Tamo2017OptimalRO} is a practically infeasible $S \approx 27^{27}$. Among the works that propose repair schemes using non-RS codes, \cite{5961826} proposes a product-matrix construction that achieves the cut-set bound for a linearly scaling subpacketization level of $S=d-k+r$, but with the constraint $n \geq 2k-1$. The construction in \cite{7990181} achieves the MSR cut-set bound for general parameters with a subpacketization of $S \approx r^{\lceil \sfrac{n}{r} \rceil}$, which for the parameters in combination no. 1 in Table \ref{results}, gives $S = 5^6$. 

The proposed scheme achieves a subpacketization level that scales linearly with $n, k, d$ and $r$, for all points on the trade-off curve for full node repair. For example 1 in the above table, the subpacketization for the MSR point is $S=d-k+r = 7$, while that for the MBR point is $S = d = 17$. The subpacketization is given by $S=d-(\bar{j}-1)r$ for different points on the trade-off curve parameterized by $\bar{j}$. For partial repair, the subpacketization is given by $S=(d- (\bar{j}-1)r)\xi, \xi \in \mathbb{N}$, which is also linear in $n,k,d$ and $r$. 

\subsection{Input/Output cost}
The input-output cost is defined as the number of symbols that need to be read by a helper node from its memory, which are then linearly combined and sent to the newcomers. For the repair of $r$ newcomers, the proposed scheme achieves an input-output cost of $r+e$. As recorded in Table \ref{results}, small values of $e$ are often sufficient to achieve the cut-set bound, implying a low input-output cost of $r$.

\subsection{Computational complexity}
The finite field operations are in the finite extension field $\mathbb{F}(q^l)$. 
For reconstruction of the file from any $k$ nodes, the DC interpolates a linearized polynomial of degree $P$ in the finite field $\mathbb{F}_{q^l}$. The complexity of interpolation of a linearized polynomial is $\mathrm{O}(P^\epsilon)$ operations in $\mathbb{F}_q$, where $ \epsilon $ is the matrix multiplication exponent \cite{10.1145/3087604.3087617}. One of the frequently used fast algorithms for matrix multiplication is known as the Strassen algorithm \cite{10.1007/BF02165411}, which achieves the matrix multiplication exponent $\epsilon=2.807$. Therefore, the fastest reconstruction complexity is $\mathrm{O}(P^{2.807})$ operations in $\mathbb{F}_q$.

The repair complexity, defined as the computational complexity of repair operations, is smaller than that of random linear coding, because only rows of $\bar{j}r < dr$ packets are multiplied with the parity matrix $\mathbf{M}_{\bar{j}r \times r}$ in the repair process, which makes the computations faster, unlike random linear coding where, in general, $d-(\bar{j}-1)r$ random linear combinations of $dr$ packets are computed during the repair process, requiring $dr(d-(\bar{j}-1)r)$ finite field operations. In the proposed scheme, the number of finite field operations during repair is given by $\bar{j}r(d-(\bar{j}-1)r)$. For the MBR point, where $\bar{j} = 1$, this results in a reduction in repair complexity by a factor of $d$. Since the number of coefficients of linear combination needed to be communicated is also reduced, the overhead is lower.

\section{Conclusions} \label{sec:conc}
In this paper, we studied large scale distributed storage systems and considered the problem of repair of partial failures of multiple nodes via broadcast transmissions over a wireless medium. We derive the optimal storage-repair bandwidth trade-off curve by constructing an information flow graph to represent the evolution of the system with time, and computing the minimum cut-set capacity in the information flow graph. It has been shown in previous literature that, compared to single node repair, repairing multiple nodes at once and exploiting the broadcast nature of the medium reduce the repair bandwidth per failed node. We illustrate that the optimal repair bandwidth is reduced even further by utilizing the remaining content in the nodes that experience partial failure. We derive the invariant conditions related to the way the subspaces stored by different nodes intersect, that are sufficient for the existence of a feasible functional regenerating code, and provide an intuitive insight into how functional regenerating codes may be constructed. We then present an explicit storage and repair framework for the functional repair of multiple node failures in a broadcast setting, achieving all the points on the trade-off curve, as illustrated in Fig. \ref{fig:tradeoff}, with high probability. We also extend the framework to the case when there is only partial failure of multiple nodes. The proposed storage and repair framework achieves multiple desirable characteristics for regenerating codes. These characteristics include achieving the optimal storage-repair bandwidth trade-off with high probability, for \textit{many feasible parameters} ($n,k,d,r$) not achieved in existing literature; achieving \textit{all points on the trade-off curve}; a subpacketization level that scales linearly with respect to the code parameters; low input-output cost; and low computation complexity during repair.

An interesting future research direction is the consideration of more realistic heterogeneous scenarios, in which the storage nodes have unequal capacities and experience unequal partial failures, as explored in \cite{benerjee2015tradeoff} for a flexible reconstruction degree, where every node in the system has a dynamic repair bandwidth and dynamic storage capacity. In this case, finding the min-cut capacity of the information flow graph must be formulated as a linear programming problem. While trivial extensions of the proposed scheme for the homogeneous setting in this paper can allow us to obtain a sub-optimal achievable scheme, a thorough analysis should potentially provide significant gains and interesting insights. It may be possible to group nodes together or assign different tolerances to partial failures to different nodes based on their storage capacities or connectivity to other storage nodes.
\bibliography{main}
\bibliographystyle{IEEEtran}

\section{Appendix}

\subsection{Proof of Theorem \ref{theorem 1}} \label{appendix1}

\begin{proof}
Consider an information flow graph $G$ that enumerates all possible failure/repair patterns and all possible DCs when the number of failures/repairs is bounded by $r$. We analyze the connectivity in the information flow graph to find the minimum repair bandwidth. Initially, the source delivers $\alpha$ bits each to $n$ nodes, which then become active while the source node becomes inactive. When $r$ nodes lose part of their data, a repair round is triggered in which they connect to $d$ surviving nodes and receive $\beta$ bits from each of them. Using the received messages, and the un-erased content in their local memories, the $r$ nodes recover their lost content. The active nodes, before the $s$-th repair round is triggered, are labelled as $\mathcal{R}_s^{act}\triangleq\{(s-1)n+1, \ldots , sn\}$. Consider that in the $s$-th  repair round, the nodes $\mathcal{R}_s^{f}\triangleq \{(s-1)n+(s-1)r+1, \ldots ,(s-1)n+sr\}$ are repaired. Denote the set of labels of the newcomers in the $s$-th round by $\mathcal{R}_s^{new}\triangleq\{ sn+(s-1)r+1, \ldots, sn+sr \}$, which represent the repaired nodes. The complete nodes are copied into the next round and labeled as  $\mathcal{R}_s^{comp}\triangleq \{\ i: i\in [sn+1:(s+1)n]\setminus \mathcal{R}_s^{new}\}$. The newcomers and the copied complete nodes together form the set of active nodes for the next repair round, i.e., $\mathcal{R}_{s+1}^{act}= \mathcal{R}_s^{new} \cup \mathcal{R}_s^{comp}$, while the nodes from all the previous rounds become inactive.

For the reconstruction property to hold, any DC that connects to the ``out-nodes'' of any $k$ active nodes must satisfy
\begin{align}\label{bound}
    C&=\text{mincut}(S,DC) \\
    &\geq \sum_{s=1}^{\sfrac{k}{r}}\min \{ (r\alpha_1 + (d-r(s-1))\beta, r\alpha) \}.
\end{align}
First, we show that there exists an information flow graph $G^\prime$, for which Eq. \eqref{bound} holds with equality. Consider that after the storage nodes have gone through $h$ repair rounds, a DC connects to the nodes with indices $\mathcal{R}_{h}^{new}$. Consider a cut $(U,\bar{U})$ between $S$ and $DC$, which separates the graph into the disjoint sets of nodes $U$ and $\bar{U}$, constructed as follows. For the $s$-th repair round, if $r\alpha \leq r\alpha_1 + (d-(s-1)r)\beta$, then we include the nodes $x_{in}^{\mathcal{R}_s^{new}}$ in $U$, and $x_{out}^{\mathcal{R}_s^{new}}$ in $\bar{U}$, similarly to the cut $\chi_2$ in Fig. \ref{fig:flow1}; otherwise, we include $x_{in}^{\mathcal{R}_s^{new}},x_{out}^{\mathcal{R}_s^{new}},x_{out}^{\mathcal{R}_s^{f}}$, and all auxiliary nodes in $\bar{U}$, while the nodes $x_{mid}^{\mathcal{R}_s^{f}}$ and $x_{in}^{\mathcal{R}_s^{act}}$ are included in $U$, similarly to the cut $\chi_1$ in Fig. \ref{fig:flow1}. We argue that the capacity of the cut $(U,\bar{U})$ meets that of Eq. \eqref{bound} with equality.

Second, we argue that any information flow graph has at least the cut capacity in Eq. \eqref{bound}. We note that there is a topological order of the nodes in an information flow graph by which any node $\nu_i$ having incoming edges only from nodes in $\bar{U}$, also belongs to $\bar{U}$, and an edge from $\nu_i$ to $\nu_j$ implies $i<j$. The min-cut in a repair round can be of only two types. In a ``type-1'' cut, all helper nodes not yet included in $\bar{U}$, and the out-vertices of the faulty nodes, are included in $\bar{U}$, as illustrated by the cut $\chi_1$ in Fig. \ref{fig:flow1}. A type-1 cut includes both the in-vertices and the out-vertices of the newcomer nodes in $\bar{U}$. In a ``type-2'' cut, all helper nodes and the in-vertices of the newcomers are included in $U$, while the out-vertices of the newcomers are included in $\bar{U}$, as illustrated by the cut $\chi_2$ in Fig. \ref{fig:flow1}. Other cuts that include the in-vertices of a subset of the newcomers in $U$, while including the in-vertices of the remaining newcomers in $U$, always have a capacity larger than type-1 cuts. 

We illustrate this using Fig. \ref{fig:flow1}: Suppose cut $\chi_3$ passes through the edge $x_{in}^{5}\rightarrow x_{out}^5$, and the edges $x_{mid}^2 \rightarrow x_{out}^2, x_{out}^{3}\rightarrow h^3, x_{out}^{4} \rightarrow h^4$. The capacity of cut $\chi_3$ is given by $\alpha_1 + 2\beta + \alpha$, which is always greater than the capacity of the cut $\chi_1$, since $\alpha_1 + 2\beta + \alpha > 2\alpha + 2\beta$. 
Therefore, we see that the minimum cut is always either type-1 or type-2 in any particular repair round, and the minimum cut thus obtained achieves the cut capacity given by Eq. \eqref{bound}. 

The expression for the min-cut capacity is derived in the following way. The contribution of the first repair round to the minimum cut capacity is given by $\min \{ r\alpha_1 + d\beta , r\alpha \}$, where the first term denotes the capacity contribution from a type-1 cut, and the second term denotes the contribution from a type-2 cut. Consider the second repair round. The $r$ nodes which are repaired in the first repair round already lie in $\bar{U}$, so edges originating from these nodes do not contribute to the min-cut capacity from the second round onwards. 

The min-cut capacity contribution by the second repair round is given by $\min \{ r\alpha_1 + (d-r)\beta , r\alpha \}$. We follow this procedure of passing the min-cut through each repair round with a type-1 or a type-2 cut until the $DC$ lies in $\bar{U}$, which happens when the $k$ nodes to which the $DC$ is connected to lie in $\bar{U}$. When we sum the contributions from each repair round to the min-cut capacity, we obtain Eq. \eqref{bound}.

From Proposition \ref{prop:1}, the min-cut capacity must be greater than the file size to ensure that the DC is able to reconstruct the file from any $k$ nodes. Therefore, the following must be satisfied for guaranteed file reconstruction:
\begin{align} \label{sufficiency}
    \sum_{s=1}^{\sfrac{k}{r}}\min \{ (r\alpha_1 + (d-r(s-1))\beta, r\alpha) \} \geq M.
\end{align}

We are interested in characterizing the achievable trade-offs between the storage $\alpha$ and the repair bandwidth $d\beta$ for given $(n,k,\rho)$. If $r\alpha \leq r\alpha_1 + (d-k+r)\beta$, then the min-cut is type-2 in each of the $\sfrac{k}{r}$ repair rounds; if $ r\alpha_1 + (d-k+r)\beta \leq r\alpha \leq r\alpha_1 + (d-k+2r)\beta$, then the min-cut is type-2 for the first $\sfrac{k}{r}-1$ repair rounds, but type-1 in the $\sfrac{k}{r}$-th repair round. In general, if $r\alpha_1 + (d-rs)\beta \leq r\alpha \leq r\alpha_1 + (d-r(s-1))\beta, s \in [\sfrac{k}{r}]$, then the min-cut is type-2 for the first $s$ repair rounds, and type-1 for the remaining $\sfrac{k}{r}-s$ repair rounds. \\

Let $b_{s-1} \triangleq \frac{\frac{d-k}{r}+s}{1-\rho}\beta, s=[\sfrac{k}{r}]$. The capacity of the min-cut is a piecewise-linear function of $\alpha$ given by
\begin{align}
    C(\alpha)&=\left\{
                \begin{array}{ll}
                k\alpha, \hspace{80pt} \alpha \in (0,b_0]\\
                (k-r)\alpha + \left( r\alpha_1 + (d-k+r)\beta\right),\\
                \hspace{100pt} \alpha \in (b_0,b_1]\\
                \vdots \\
                r\alpha + \sum_{i=1}^{\sfrac{k}{r}-1} \left(  r\alpha_1 + (d-k+ir)\beta\right),\\ \hspace{100pt} \alpha \in (b_{\sfrac{k}{r}-2}, b_{\sfrac{k}{r}-1}]\\
                 \sum_{i=1}^{\sfrac{k}{r}} \left(  r\alpha_1 + (d-k+ir)\beta\right),\\
                 \hspace{100pt} \alpha \in (b_{\sfrac{k}{r}-1}, \infty]
                \end{array}
                \right. \\
            &= \left\{
                \begin{array}{ll}
                k\alpha, \hspace{3.5pt} \alpha \in (0,b_0]\\
                (k-ir(1-\rho))\alpha +(1-\rho) \sum_{j=0}^{i-1} rb_j,\\
                \hspace{20pt} \alpha \in (b_{i-1},b_i], i=1,2,\ldots, \sfrac{k}{r}-1 \\
                k\rho \alpha + (1-\rho)\sum_{j=0}^{\sfrac{k}{r}-1} rb_j, \\ \hspace{20pt} \alpha \in (b_{\sfrac{k}{r}-1},\infty]
                \end{array}
                \right.    
\end{align}
Note that $C(\alpha)$ is a strictly increasing function. To find the minimum $\alpha$ for a given repair bandwidth $\gamma=d\beta$ such that $C(\alpha)\geq M$, we let $\alpha^* = C^{-1}(M)$ to obtain
\begin{align}
    \alpha^*&=\left\{
                \begin{array}{ll}
                \frac{M}{k} \hspace{45pt}  M\in (0,kb_0] \\
                \frac{M - g(i)\gamma}{k-ir(1-\rho)} \hspace{15pt} M \in \Big[(k-ir(1-\rho))b_{i-1} \\
                +(1-\rho)\sum_{j=0}^{i-1}rb_j ,(k-ir(1-\rho))b_{i} \\
                \hspace{2cm} +(1-\rho)\sum_{j=0}^{i-1}rb_j\Big]
                \end{array}
                \right. \\
             &=\left\{
                \begin{array}{ll}
                  \frac{M}{k} \hspace{16.5mm} \gamma \in \left[ f(0), \infty \right) \\
                  \frac{M - g(i)\gamma}{k-ir(1-\rho)} \hspace{6mm} \gamma \in \left[ f(i), f(i-1)\right] 
                \end{array} \label{curve_alpha}
              \right. 
\end{align}
\end{proof}
\subsection{Proof of Theorem \ref{theorem 2}} \label{appendix2}

\begin{proof}
The proof follows essentially the same steps as in the proof for Theorem \ref{theorem 1}. $k$ nodes are divided into $p$ groups of $r$ nodes where $p=\lfloor \frac{k}{r} \rfloor$, and a remaining group of $k-k_0$ nodes. The min-cut capacity contribution by the group of $k-k_0$ nodes is given by $\min \{ (k-k_0) \alpha , (k - k_0)\alpha_1 + (d-k_0)\beta \}$, while the min-cut capacity contribution of the other groups of $r$ nodes is computed in exactly the same manner as for Theorem \ref{theorem 1}. The min-cut capacity is therefore written as:
\begin{align} \label{sufficiency_rnotk}
  \nonumber &C(\alpha) =  \min \Big\{ (k-k_0) \alpha , (k - k_0)\alpha_1 + (d-k_0)\beta \}\\
   & \hspace{0.4cm} + \sum_{s=1}^{\sfrac{k}{r}}\min \{ (r\alpha_1 + (d-r(s-1))\beta, r\alpha) \Big\}.
\end{align}
The rest of the derivation of the piecewise linear function follows the same procedure as in the proof for Theorem \ref{theorem 1}.
\end{proof}

\subsection{Proof of Lemma \ref{lemma:1}} \label{appendix3}
\begin{proof}
We have
\begin{align}
    &\dim\Big(W_A \cap  W_{\mathcal{B}}\Big) = \dim\Big(W_A\cap \sum_{i=1}^{v}W_{i}\Big)\\
    & = \dim\left(W_A \cap \left( \sum_{t=1}^{\lfloor \sfrac{v}{r} \rfloor}W_{\mathcal{R}_t} + W_{\mathcal{R}'}\right)\right)\\
\nonumber  &= \sum_{u=1}^{\bar{j}-1} \cancelto{0}{\dim\Big(W_A \cap W_{\mathcal{R}_{u}}\Big\vert \sum_{t=1}^{u-1}W_{\mathcal{R}_{t}}\Big)}\\
 & \hspace{0.6cm} +\dim\Big(W_A \cap W_{\mathcal{B}\setminus (\mathcal{R}_1,\ldots,\mathcal{R}_{\bar{j}-1})}\Big\vert \sum_{t=1}^{\bar{j}-1}W_{\mathcal{R}_{t}}\Big) \label{chain_rule}\\ 
 &\overset{a}{=} \dim\Big(W_A \cap W_{\mathcal{B}\setminus (\mathcal{R}_1,\ldots,\mathcal{R}_{\bar{j}-1})}\Big\vert \sum_{t=1}^{\bar{j}-1}W_{\mathcal{R}_{t}}\Big)\\
 \nonumber &\overset{b}{=} \dim\Big(W_A \cap W_{\mathcal{R}_{\bar{j}}} \Big\vert \sum_{t=1}^{\bar{j}-1}W_{\mathcal{R}_t} \Big) + \\
 & \hspace{0.7cm} \dim\Big(W_A \cap W_{\mathcal{B}\setminus (\mathcal{R}_1,\ldots,\mathcal{R}_{\bar{j}})}  \Big\vert \sum_{t=1}^{\bar{j}-1}W_{\mathcal{R}_t} \Big) \label{step b}
    \end{align}
    where step $a$ is due to property \textbf{L1}, while step $b$ is due to \textbf{L2}. 
    Using Eq. \eqref{step b} recursively, we obtain
    \begin{align}
        \nonumber &\dim\Big(W_A \cap W_{\mathcal{B}}\Big) = 
        \sum_{s\geq \bar{j}}^{\lfloor\sfrac{v}{r}\rfloor} \dim\Big(W_A \cap W_{\mathcal{R}_s}\Big\vert \sum_{t=1}^{\bar{j}-1}W_{\mathcal{R}_t}\Big) 
         \\
        & \hspace{2cm} +\dim\Big(W_A\cap W_{\mathcal{R}'}\Big\vert \sum_{t=1}^{\bar{j}-1}W_{\mathcal{R}_t}\Big) \label{decomposition} \\
        & \hspace{2cm}\overset{c}{=} \sum_{s\geq \bar{j}}^{\lfloor\sfrac{v}{r}\rfloor} \dim\Big(W_A \cap W_{\mathcal{R}_s}\Big\vert \sum_{t=1}^{\bar{j}-1}W_{\mathcal{R}_t}\Big)
    \end{align}
    where step $c$ follows from $\textbf{L1}$.
\end{proof}

\end{document}